\newtheorem{theorem}{Theorem}
\newtheorem{proposition}{Proposition}
\newtheorem{lemma}{Lemma}
\begin{document}

\title{Community detection in multiplex networks based on orthogonal nonnegative matrix tri-factorization}

\author{Meiby~Ortiz-Bouza,~\IEEEmembership{}
        Selin~Aviyente~\IEEEmembership{}
\thanks{ M. Ortiz-Bouza and S. Aviyente were with the Department
of Electrical and Computer Engineering, Michigan State University, East Lansing,
MI, 48824.\protect\\
E-mail: ortizbou@msu.edu, aviyente@egr.msu.edu
}
\thanks{
}}

\markboth{
}%
{Shell \MakeLowercase{\textit{et al.}}: Community detection in multiplex networks based on orthogonal nonnegative matrix tri-factorization}
\newcommand{\matr}[1]{\mathbf{#1}}

\IEEEtitleabstractindextext{%
    \begin{abstract}
         Networks are commonly used to model complex systems.  The different entities in the system are represented by nodes of the network and their interactions by edges. In most real life systems, the different entities may interact in different ways necessitating the use of multiplex networks where
         multiple links are used to model the interactions.
        One of the major tools for inferring network topology is community detection. Although there are numerous works on community detection in single-layer networks, existing community detection methods for multiplex networks mostly learn a common community structure across layers and do not take the heterogeneity across layers into account. In this paper, we introduce a new multiplex community detection method that  identifies communities that are common across layers as well as those that are unique to each layer. The proposed method, Multiplex Orthogonal Nonnegative Matrix Tri-Factorization, represents the adjacency matrix of each layer as the sum of two low-rank matrix factorizations  corresponding to the common and private communities, respectively. Unlike most of the existing methods which require the number of communities to be pre-determined, the proposed method also introduces a two stage method  to determine the number of common and private communities. The proposed algorithm is evaluated on synthetic and real multiplex networks, as well as for multiview clustering applications, and compared to state-of-the-art techniques.
    \end{abstract}
        
\begin{IEEEkeywords}
Multiplex Networks, Community Detection, Nonnegative Matrix Tri-factorization, Eigengap, Low-Rank Structure
\end{IEEEkeywords}
}

\maketitle

\IEEEdisplaynontitleabstractindextext
\IEEEpeerreviewmaketitle

\section{Introduction}\label{sec:introduction}
\IEEEPARstart{C}{omplex} networks are usually used to represent many real world systems, ranging from social to biological ones  \cite{barabasi2013network}, where the different agents and their relations are represented as the nodes and edges of the network, respectively. Traditional network models  employ simple graphs, where there is a single edge between any two nodes. Thus, these models cannot capture multiple modes of interaction that may exist between the nodes. Recently, multiplex networks that represent multiple modes of interaction have been proposed. A multiplex network is a  multilayer network where all layers share the same set of nodes with different topologies \cite{kivela_multilayer_2014}. Multiplex networks have been used to model a variety of complex systems, including living organisms, human societies, transportation systems and critical infrastructures \cite{smith2019using,aleta2017multilayer}.

Community detection is an important tool in network analysis, where communities are defined as groups of nodes that are more densely connected to each other than they are to the rest of the network. Most of the existing work on community detection \cite{fortunato2016community} focuses on single layer networks. Community detection methods for multiplex networks \cite{magnani2021community} can be grouped into three main classes. The first class of methods merges the layers in a multiplex network, using a flattening algorithm, then apply single-layer community
detection to the aggregated network \cite{berlingerio2011finding,chen2017multilayer,taylor2017super}. While these methods are computationally efficient, they can only identify communities that are common across all layers. Moreover, due to the flattening process, some spurious communities may emerge. The second class of methods applies community detection to each layer individually and then merges the results \cite{berlingerio2013abacus,tang2012community,dong2013clustering}. 
These methods include nodes in the same community only when they are part of the same community in at least one layer. 
Finally, the third class of methods operates directly on the multiplex network model \cite{kuncheva2015community,de2015identifying,mucha_community_2010,amelio2014community,zhu2014unified,roxana_pamfil_relating_2018}. 

Existing multiplex community detection approaches typically assume that the community structure is the same across layers and find the partition that best fits all layers. Thus, they do not differentiate between communities that are common across layers from those that are unique to each layer. This is particularly important for real world applications where the networks are heterogeneous, and the different layers correspond to different modes of interaction. For example, in social networks, a group of individuals may be well connected via friendships on Facebook; however, this group of individuals will likely not work at the same company. Thus, in a situation like this, a given  community will only be present in a subset of the layers, and different communities may be present in different subsets of layers. 

In this work, common communities are defined as communities that are observed in more than one layer, i.e., communities that are common across any subset of two or more layers, and private communities as communities that are unique to each layer. The problem of detecting  common and private communities is then formulated using a novel framework titled Multiplex Orthogonal Nonnegative Matrix Trifactorization (MX-ONMTF). In the proposed framework, each layer's adjacency matrix is represented as the sum of two low-rank matrix factorizations corresponding to the common and  private communities, respectively. The resulting optimization problem is solved using an iterative multiplicative update algorithm. The proposed approach also addresses the problem of determining the number of communities. Unlike most existing work, where the number of communities is determined through a greedy search, in this paper, a two-step approach is proposed. 
The proposed algorithm is first evaluated on synthetic benchmark multiplex networks with different numbers of layers, nodes, communities, noise levels, and inter-layer dependency probability. Next,  the proposed method is applied to real networks including social and biological networks. Finally, the algorithm is evaluated for multiview clustering task, where the communities across all layers are assumed to be the same.  

This paper extends our prior work \cite{ortiz2022orthogonal} where the community detection problem is formulated for two-layer multiplex networks. This work introduces significant new contributions. First, MX-ONMTF is generalized  for multiplex networks with $L$ layers where the common communities can be observed for a subset of layers. Second, a new two stage method is introduced to determine the number of common communities such that communities that are common across two or more layers can be identified. A detailed theoretical analysis of the convergence of the algorithm and recovery guarantees is also provided. Finally, this paper includes an extensive evaluation of the method on synthetic and real networks, and multiview clustering applications. 

The rest of the paper is organized as follows. Section \ref{sec:relworks} presents a summary of related works. Section \ref{sec:background} provides background on community detection, multiplex networks, and orthogonal nonnegative matrix tri-factorization. Sections \ref{sec:mxonmtf} and \ref{sec:convergence} present the proposed multiplex community detection algorithm and its convergence analysis. Section \ref{sec:recovery} establishes the theoretical properties of the algorithm, while Section \ref{sec:experiments} illustrates results on both simulated and real networks. Finally, Section \ref{sec:conclusions} provides conclusions and discussion on future work.
\section{Related works}
\label{sec:relworks}
The method proposed in this paper belongs to the third class of algorithms, which operate directly on the multiplex network model. There are different types of algorithms that fall in this class: random walk, statistical generative network models, label propagation, objective function optimization, and Nonnegative Matrix Factorization (NMF).

Methods based on random walkers model the dynamic process on networks as random walks where the process is
more likely to persist on the vertices in the same community
and far less on the vertices in different communities. For instance, LART \cite{kuncheva2015community} is initialized by assigning each node in each layer to its own community. Hierarchical clustering is then used to merge nodes based on a distance matrix. The partition with the highest multiplex modularity is chosen. 
In \cite{de2015identifying}, Infomap which is based on a compression of network flows is proposed to identify communities within and across layers. However, Infomap tends to assign each physical node across layers to the same community, not differentiating the topological differences across layers. 

Statistical methods such as \cite{ali2019latent} use Weighted Stochastic Block Model (WSBM) to detect common and private communities in heterogeneous weighted networks. Although this method addresses the heterogeneity of networks across layers, the method is limited to detecting only common communities that are shared by all layers, ignoring communities that may be shared by only a subset or different subsets of layers. In \cite{de2017community}, authors propose a generative model and an expectation maximization algorithm for community detection and link prediction in multilayer networks. Although the method allows for different connectivity patterns in each layer, the interdependence between layers is only taken into account for link prediction, while the layers are assumed to share a common community structure.

The third class of methods, Label Propagation Algorithms (LPA), is  based on the intuition that a label can become dominant in a densely connected group of nodes but will have trouble crossing a sparsely connected region. In \cite{boutemine2017mining}, an LPA-based method for community detection in multidimensional networks is proposed to identify communities and the subset of layers in which each of these communities is observed, simultaneously. However, this algorithm fails to detect communities that are private to each layer and communities that may be common among a small number of layers. 

The fourth type of multiplex community detection methods is based on defining an objective function and identifying the community structure that maximizes/minimizes the objective function.  For example, Generalized Louvain (GenLouvain) \cite{mucha_community_2010} uses an extended definition of modularity and is one of the fastest methods for community detection in multiplex networks. As GenLouvain assigns each node-layer tuple to its own community, it cannot identify common communities across layers. More recently, multiobjective genetic and evolutionary algorithms such as MultiMOGA  \cite{amelio2014community} and MOEA/D-TS \cite{karimi2020multiplex} have been used to jointly maximize the modularity of each layer and the similarity between the community structures across layers. These methods find a shared community structure across all layers, not differentiating communities that may be unique to each layer. In \cite{chen2017block},  extension of normalized cut to multiplex networks is proposed by constructing a block Laplacian matrix with each block corresponding to a layer. This method relies on selecting a parameter $\beta$ that controls the consistency of the community structure across different layers.  

The last class of methods is based on NMF which, because of its interpretability and good performance, has been broadly used for community detection in single-layer, multiplex, multilayer, and dynamic networks \cite{ding_equivalence_2005,wang_nonnegative_2011,mankad2013structural,sun2017non}. In \cite{ma2018community},  Semi-Supervised joint Nonnegative Matrix Factorization (S2-jNMF) is proposed for detecting the common communities across layers in a multiplex network. A greedy search of dense subgraphs is performed and these subgraphs are used as {\em a priori} information to create new adjacency matrices for each layer. 
In \cite{gligorijevic2018non}, a two-step approach is proposed, where first  a nonnegative low dimensional feature representation of each layer is found using one of the four different NMF models. These community structures are then used to obtain a consensus community structure. Authors in \cite{nguyen2015community} use NMF for detecting communities in multiplex social networks, where both unifying and coupling approaches are proposed. The unifying approach finds a common community structure  by aggregating all layers, while the coupling approach finds mostly consistent community structures. 
Most of the aforementioned NMF based methods find a common structure across all layers or for a majority of layers and do not consider cases where common communities may be present in different subsets of layers. Moreover, they do not detect private communities. These methods  also require that the number of communities is provided {\em a priori}. 
\section{Background}\label{sec:background}
\subsection{Community Detection}
Community detection for a single-layer network is the partitioning of a node set $V$ as $\mathcal{C} = \{C_1, ..., C_K\}$ where $K$ is the number of communities. 
One of the most popular algorithms for partitioning graphs is the minimum cut method and its variants such as ratio cut and normalized cut \cite{von2007tutorial}. In these methods, the network is partitioned such that the number of edges between different communities is minimized.

Given a single layer graph, $\mathcal{G}=\{V, E, \matr{A}\}$, where $V$, $E$ and $\matr{A}$ are the set of nodes, edges, and adjacency matrix of the graph, respectively, the min-cut problem aims to find a partition by minimizing the following objective function:
\begin{equation}
    \label{Eq: Cut Problem}
    \frac{1}{2}\sum_{p=1}^{K} \mathbf{Cut}(C_p,\bar{C}_p),
\end{equation}
where $\bar{C}_p$ is the complement of $C_p$ and $\mathbf{Cut}(C_p,\bar{C}_p)=\sum_{i\in C_p, j\in \bar{C}_p} A_{ij}$ for  two disjoint sets $C_p$ and $\bar{C}_p$. The min-cut problem is NP-hard. However,  it has been shown \cite{von2007tutorial,ding_equivalence_2005} that  spectral clustering and nonnegative matrix factorization provide  solutions to relaxed versions of the min-cut problem. 

In particular, spectral clustering solves the min-cut problem by embedding each node in a lower
dimensional subspace spanned by the eigenvectors of the normalized Laplacian matrix $\matr{L}$, where $\matr{L}=\matr{D}^{-1/2}(\matr{D-A})\matr{D}^{-1/2}$,
 with $\matr{D}$ being the diagonal matrix defined as $D_{ii}=\sum_{j} A_{ij}$. k-means clustering is then applied to this low-dimensional subspace spanned by the   eigenvectors. Authors in \cite{ding_equivalence_2005} show that NMF is equivalent to Laplacian based spectral clustering. Normalized Cut using the normalized adjacency matrix, $\tilde{\matr{A}}=\matr{D}^{-1/2}\matr{AD}^{-1/2}$, is equivalent to the nonnegative matrix factorization problem 
$\underset{\matr{H}\geq 0}{\operatorname{argmin}}=||\tilde{\matr{A}}-\matr{HH}^\top||^2$. 

\subsection{Multiplex Networks}
Multiplex networks can be represented using a finite sequence of graphs $\{\mathcal{G}_l\}$, where $l \in \{1,2,\ldots,L\}$, $\mathcal{G}_l=(V_l,E_l, \matr{A}_{l})$ \cite{cozzo2015structure}. $V_l$ is the set of nodes in layer $l$  and $\matr{A}_l\in \mathbb{R}^{n\times n}$ is the adjacency matrix for layer $l$. In this paper, we use undirected (symmetric) weighted and binary adjacency matrices. For a weighted adjacency matrix, $A_{l_{ij}}\in [0,1]$, and for a binary adjacency matrix, $A_{l_{ij}}\in\{0,1\}$.

\subsection{Orthogonal Nonnegative Matrix Tri-Factorization}
Nonnegative Matrix Factorization decomposes a nonnegative matrix $\matr{W}\in \mathbb{R}^{n\times m}$ into the product of two low-rank nonnegative matrices $\matr{V}\in \mathbb{R}^{n\times k}$ and $\matr{U}\in \mathbb{R}^{m\times k}$, such that $\matr{W}\approx \matr{VU}^\top$ and $k \ll n,m$. $\matr{V}$ and $\matr{U}$ are found by solving the optimization problem
\begin{equation}
\label{eq:NMF}
\centering
    \underset{\textbf{V},\textbf{U}>0}{\operatorname{argmin}}||\matr{W}-\matr{VU}^\top||_{F}^2.
\end{equation}
In NMF-based community detection, $\matr{V}$ and $\matr{U}$ are the community feature matrix
 and the community indicator matrix, respectively.
Adding orthogonality constraints 
improves the performance of NMF as orthogonality and nonnegativity force each row of $\matr{V}$ ($\matr{U}$) to have only one nonzero element which implies that each node belongs only to one community. Orthogonal NMF has been broadly used in community detection where $\matr{W}$ is usually the adjacency matrix and $k$ is the number of communities \cite{wu2018nonnegative,lu2020community}.

\begin{figure*}[t]
    \centering
    \includegraphics[width=0.95\linewidth]{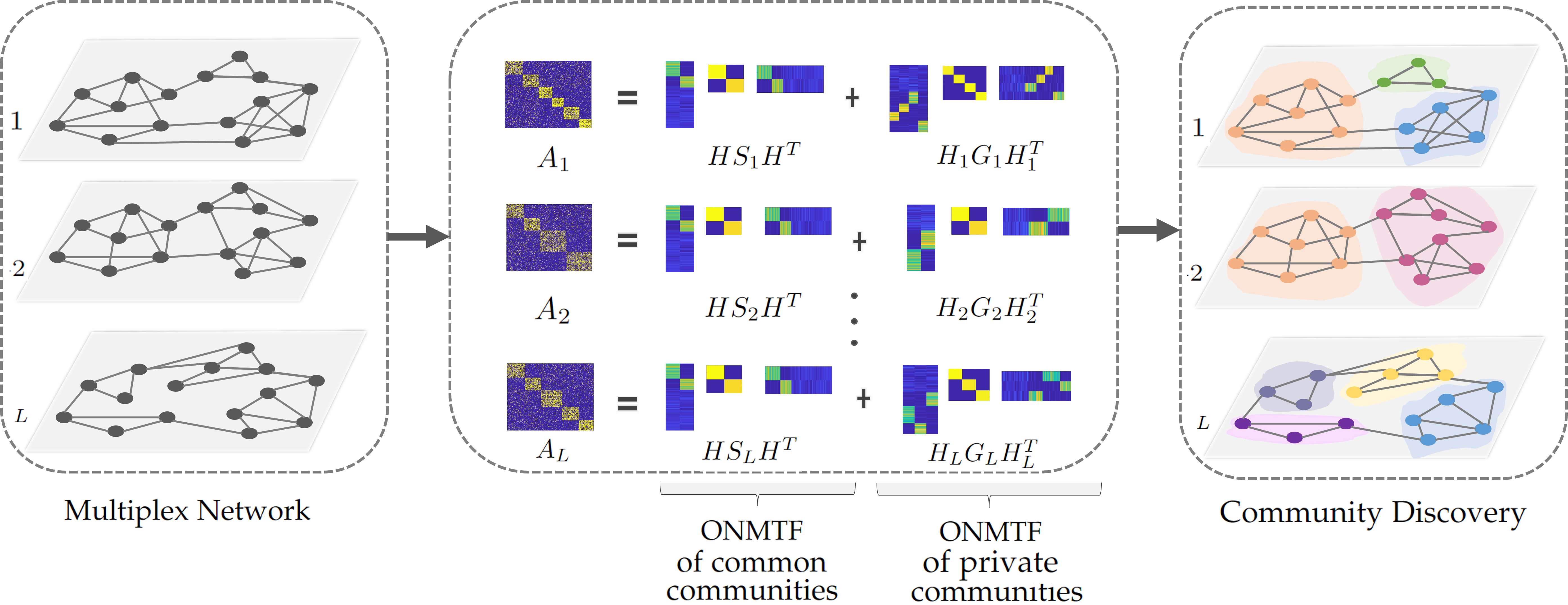}
    \caption{Illustration of the proposed community detection algorithm.}
    \label{fig:factLl}
\end{figure*}

Different extensions of NMF, such as Symmetric Nonnegative Matrix Tri-factorization \cite{ding_orthogonal_2006}, where $\matr{W}$ is approximated by $\matr{USU}^\top$, have been proposed for community detection. $\matr{U}$ contains community membership information while the symmetric matrix $\matr{S}\in \mathbb{R}^{k \times k}$ provides more degrees of freedom to the approximation. In this paper, we will use Orthogonal Nonnegative Matrix Tri-factorization (ONMTF) to formulate the community detection problem in multiplex networks.
\section{Proposed Method (MX-ONMTF)}\label{sec:mxonmtf}
The proposed method, MX-ONMTF, models each layer’s adjacency matrix as a sum of low-rank representations of common and private communities using Orthogonal Nonnegative Matrix Tri-Factorization (ONMTF). Fig. \ref{fig:factLl} illustrates the overview of the proposed algorithm for a multiplex network with $L$ layers and two common communities.

\subsection{Problem Formulation}
In this paper, we define common communities as communities that are observed in more than one layer.\\

\noindent \textbf{Definition 1:} An ideal common community $C$ in a multiplex network $\{\mathcal{G}_l\}$ is defined as a subgraph with the same set of nodes for a subset of layers $\textbf{m} \subseteq \{1,2,\ldots,L\}$, where $|\textbf{m}|> 1$. Mathematically, $C$ can be defined as\\
\begin{align*}
\begin{split}
    C=&\{(V^C_l, E^C_{l}): V^C_l \subseteq V_l, E^C_{l} \subseteq  V^C_l \times V^C_l, V_{l}^{C}=V_{k}^{C} \\
    & l, k \in \textbf{m}, 
    \textbf{m} \subseteq \{1,2,\ldots,L\}, |\textbf{m}|> 1\}.\\
\end{split}
\end{align*}

\noindent \textbf{Definition 2:} A private community $C$ in a multiplex network $\{\mathcal{G}_l\}$ is defined as any community that is not common across at least two layers. 

For a  multiplex network with $L$ layers and adjacency matrices, $\matr{A}_{l} \in \mathbb{R}^{n\times n}$, $l\in \{1,2,\ldots,L\}$, we model each layer's adjacency matrix in terms of common and individual communities using ONMTF. The resulting objective function can be formulated as

\begin{equation}
\begin{aligned}
\label{eq:ObjFn}
    \underset{\matr{H}\geq0,\matr{H}_{l}\geq0,\matr{S}_{l}\geq0,\matr{G}_{l}\geq0}{\operatorname{argmin}} \sum_{l=1}^{L} || \matr{A}_{l}-\matr{HS}_{l}\matr{H}^\top-\matr{H}_{l}\matr{G}_{l}\matr{H}_{l}^\top ||^2_{F}\\
    \mbox{s.t\hspace{0.1cm}} \matr{H}^\top\matr{H}=\matr{I}, \matr{H}_{l}^\top\matr{H}_{l}=\matr{I}, \mbox{with \hspace{0.1cm}} l\in\{1,2,\ldots,L\},
\end{aligned}
\end{equation}
    
\noindent where $\matr{H}\in \mathbb{R}^{n\times k_{c}}$  and $\matr{H}_{l}\in \mathbb{R}^{n\times k_{p_l}},l\in\{1,2,\ldots,L\}$ are the community membership matrices corresponding to the common and private communities, respectively, and $\matr{S}_l$ and $\matr{G}_l$ are symmetric matrices.  In this work, it is assumed that the $L$ layers have a total of $k_{c}$ common communities and $k_{p_l}$ private communities in each layer $l$. The goal is to simultaneously identify communities that are common across any subset of two or more layers and communities that are unique to each layer.
Therefore, $\matr{H}$ will contain  information for all  common communities.

\subsection{Optimization solution}
ONMTF optimization problem in \eqref{eq:ObjFn} can be solved using a multiplicative update algorithm (MUA) \cite{ding_orthogonal_2006}. Multiplicative update algorithms for solving NMF problems were introduced in \cite{lee_algorithms_2001}, while solving NMTF with orthogonal constraints was first addressed by \cite{ding_orthogonal_2006}. In this paper, we follow their approach to derive the multiplicative update rules for each variable. 

To find the update rules for $\matr{H}$, $\matr{H}_l$, $\matr{S}_l$, and $\matr{G}_l$, the following Lagrangian function with Lagrange multipliers $\matr{\Lambda}$ and $\matr{\Lambda}_l$ is minimized:
\begin{align}
\begin{split}
    \label{eq:Lagrangian}
   &\mathcal{L}(\matr{H}, \matr{H}_l,\matr{S}_l, \matr{G}_l) =\sum_{l=1}^{L} || \matr{A}_{l}-\matr{HS}_{l}\matr{H}^\top-\matr{H}_{l}\matr{G}_{l}\matr{H}_{l}^\top ||^2_{F} 
   \\
   &+ tr(\matr{\Lambda(H}^\top\matr{H - I}))+\sum_{l=1}^{L}tr(\matr{\Lambda}_l(\matr{H}_{l}^\top\matr{H}_l - \matr{I})).
\end{split}
\end{align}
For updating $\matr{H}$, we find $\matr{\nabla_H \mathcal{L}}$ as
\begin{align}
\label{eq:gradient1}
    \begin{split}
        \matr{\nabla_H \mathcal{L}}= &\sum_{l=1}^{L}(\matr{4HS}_{l}^\top\matr{H}^\top\matr{HS}_{l}  + 4\matr{H}_{l}\matr{G}_{l}^\top\matr{H}_{l}^\top\matr{HS}_{l}- 4\matr{A}_{l}\matr{HS}_{l})\\+  &\matr{4H\Lambda}.
    \end{split}
\end{align}
\vspace{-0.1cm}
Applying the KKT conditions $\matr{\nabla_H \mathcal{L}}= 0$ and $\matr{\nabla_\Lambda \mathcal{L}}=0$, we  obtain:\\
(i) $\matr{\Lambda}= \sum_{l=1}^{L}(\matr{-S}_{l}^\top\matr{S}_{l}  - \matr{H}^\top\matr{H}_{l}\matr{G}_{l}^\top\matr{H}_{l}^\top\matr{HS}_{l} + \matr{H}^\top\matr{A}_{l}\matr{HS}_{l})$.\\
(ii)  $\matr{H}^\top\matr{H=I}$.\\
Substituting (i) and (ii) in  Eq. \eqref{eq:gradient1}, we get
\begin{align}
    \begin{split}
    \label{eq:gradient2}
        \matr{\nabla_H \mathcal{L}} =& \sum_{l=1}^{L}(\matr{4H}_{l}\matr{G}_{l}^\top\matr{H}_{l}^\top\matr{HS}_{l} - 4\matr{A}_{l}\matr{HS}_{l} + 4\matr{HH}^\top\matr{A}_{l}\matr{HS}_{l}\\\matr{-4H H}^\top&\matr{H}_{l}\matr{G}_{l}^\top\matr{H}_{l}^\top\matr{HS}_{l}).
    \end{split}
\end{align}

As discussed in \cite{yoo_orthogonal_2010}, if the gradient of an error function, $\matr{\varepsilon}$, is of the form $\matr{\nabla \varepsilon= \nabla \varepsilon^+ -\nabla \varepsilon^-} $, where $\matr{\nabla \varepsilon^+>0}$ and $\matr{\nabla \varepsilon^->0}$, then the multiplicative update for parameter $\matr{\Theta}$ has the form $\matr{\Theta=\Theta\odot\frac{\nabla \varepsilon^-}{\nabla \varepsilon^+}}$. It can be easily seen that the multiplicative update  preserves the nonnegativity of $\matr{\Theta}$, while $\matr{\nabla \varepsilon}=0$ when the convergence is achieved. Following this procedure, from the gradient of the error function in Eq. \eqref{eq:gradient2}, we derive the following multiplicative update rule for $\matr{H}$

\begin{equation}
\label{eq:Hupdate}
        \matr{H} \leftarrow \matr{H}\odot\frac{\sum_{l=1}^{L}(\matr{A}_{l}\matr{HS}_{l}+ \matr{HH}^\top\matr{H}_{l}\matr{G}_{l}^\top\matr{H}_{l}^\top\matr{HS}_{l})}{\sum_{l=1}^{L}(\matr{H}_{l}\matr{G}_{l}^\top\matr{H}_{l}^\top\matr{HS}_{l} + \matr{HH}^\top\matr{A}_{l}\matr{HS}_{l})},
\end{equation}
\noindent where the multiplication and division are performed element-wise and both numerator and denominator are positive. Similarly, we obtain the following update rules for $\matr{H}_l$, $\matr{S}_l$, and $\matr{G}_l$, for each $l\in\{1,2,\ldots,L\}$:
\begin{equation}
\label{eq:Hlupdate}
        \matr{H}_l \leftarrow \matr{H}_l\odot\frac{\matr{A}_{l}\matr{H}_{l}\matr{G}_{l} + \matr{H}_{l}\matr{H}_{l}^\top\matr{HS}_{l}^\top\matr{H}^\top\matr{H}_{l}\matr{G}_{l}}{\matr{HS}_{l}^\top\matr{H}^\top\matr{H}_{l}\matr{G}_{l}^\top + \matr{H}_{l}\matr{H}_{l}^\top\matr{A}_{l}\matr{H}_{l}\matr{G}_{l}},\\
\end{equation}

\begin{equation}
\label{eq:Supdate}
    \matr{S}_l \leftarrow \matr{S}_l\odot\frac{\matr{H}^\top\matr{A}_{l}\matr{H}}{\matr{H}^\top\matr{HS}_{l}\matr{H}^\top\matr{H} + \matr{H}^\top\matr{H}_{l}\matr{G}_{l}\matr{H}_{l}^\top\matr{H}},
\end{equation}

\begin{equation}
\label{eq:Gupdate}
    \matr{G}_l\leftarrow \matr{G}_l\odot\frac{\matr{H}_{l}^\top\matr{A}_{l}\matr{H}_{l}}{\matr{H}_{l}^\top\matr{H}_{l}\matr{G}_{l}\matr{H}_{l}^\top\matr{H}_{l}+\matr{H}_{l}^\top\matr{H}\matr{S}_{l}\matr{H}^\top\matr{H}_{l}}.
\end{equation}

Since NMF algorithms are initialized with random matrices, different runs yield local minima. For this reason, we run the algorithm 50 times and report the best results
\cite{li2018nonnegative,luo2021symmetric}. 
As shown in Algorithm \ref{alg:mx-onmtf}, for each random initialization of $\matr{H}$, $\matr{H}_l$, $\matr{S}_l$, and $\matr{G}_l$, the multiplicative update rules described in   Eqs. \eqref{eq:Hupdate}-\eqref{eq:Gupdate} are repeated for 1000 iterations or until  convergence. We then select the solution that yields the maximum value of the performance metric across the different runs. For synthetic networks for which a ground truth is available, Normalized Mutual Information (NMI) \cite{danon2005comparing} is used.  For networks without ground truth, Modularity Density ($Q_D$) \cite{li2008quantitative} is used as the performance metric.

\begin{algorithm}[h!]
\caption{MX-ONMTF}
\label{alg:mx-onmtf}
\begin{algorithmic}[1]
\renewcommand{\algorithmicrequire}{\textbf{Input:}}
\REQUIRE Adjacency matrices $\matr{A}_l$, $l\in\{1,2,\ldots,L\}$. \\
\renewcommand{\algorithmicrequire}{\textbf{Output:}} 
\REQUIRE Community membership matrices $\matr{H}$, $\matr{H}_l$.
\STATE Use Algorithm \ref{alg:find-kc} to find $k_c$ and $k_{p_l}$.
\FOR{r=1 to 50}
\STATE Randomly initialize $\matr{H}, \matr{H}_l, \matr{S}_l, \matr{G}_l >0$
\FOR{1000 iterations or until convergence}
\STATE update $\matr{H}$ according to Eq. \eqref{eq:Hupdate}
\STATE update $\matr{H}_l$ for each $l\in\{1,2,\ldots,L\}$ according to Eq. \eqref{eq:Hlupdate}
\STATE update $\matr{S}_l$ for each $l\in\{1,2,\ldots,L\}$ according to Eq. \eqref{eq:Supdate}
\STATE update $\matr{G}_l$ for each $l\in\{1,2,\ldots,L\}$ according to Eq. \eqref{eq:Gupdate}
\ENDFOR
\FOR{each layer $l$}
\STATE Apply Algorithm \ref{alg:readH} with $\matr{A}_l$, $\matr{H}$, and $k_{p_l}$ as inputs to  to find $\matr{H}_{c_l}$.
\FOR{each i}
\STATE $j^*\leftarrow argmax_j\matr{H}_{c_l}(i,j)$
\IF{$\matr{H}_{c_l}(i,j^*)>\matr{H}_{l}(i,j^*)$}
 \STATE $idx(i)\leftarrow j^*$ 
\ELSE
\STATE $idx(i)\leftarrow (argmax_j \matr{H}_l(i,j))+ k_c + \sum_{n=1}^{l-1} k_{p_n}$ 
\ENDIF
\ENDFOR
\ENDFOR
\STATE Compute $\text{NMI}_r$ or $Q_{D_r}$.
\ENDFOR
\STATE Choose the partition $r^*=argmax_{r} \text{NMI}_r$\\ ($r^*=argmax_{r} Q_{D_r}$ ).
\end{algorithmic}
\end{algorithm}

\subsection{Number of communities}\label{findk}
In most NMF-based community detection algorithms, the number of communities ($k$) is an input parameter. This problem is usually addressed by detecting communities with different values of $k$ and selecting the one that gives the solution with the best pre-determined performance metric, such as modularity \cite{pramanik2017discovering}.

\begin{algorithm}[h]
\caption{Finding $k_l$, $k_c$ and $k_{p_l}$, for $l\in\{1,2,\ldots,L\}$ .}
\label{alg:find-kc}
\begin{algorithmic}[1]
\renewcommand{\algorithmicrequire}{\textbf{Input:}}
\REQUIRE Adjacency matrices $\matr{A}_l$ for $l\in\{1,2,\ldots,L\}$.\\
\renewcommand{\algorithmicrequire}{\textbf{Output:}} 
\REQUIRE Number of common communities $k_c$, total number of communities per layer $k_l$, and number of private communities per layer $k_{p_l}$, for $l\in\{1,2,\ldots,L\}$.\\
\STATE Let $\matr{L}^{null}_{l}$ be the normalized Laplacian the of Erd\H{o}s–R{\'e}nyi null model. 
\STATE $\matr{L}^{null}_{l}=\matr{V}_{l}\matr{\Lambda}_{l}\matr{V}_{l}^{T}$
\STATE $\delta \leftarrow quantile_{0.95}[max\{|\lambda^{null}_i| - |\lambda^{null}_{i+1}|, i \ge 2\}]$ 
\STATE $k_l\leftarrow min\{k: |\lambda_i|-|\lambda_{i+1}| > \delta , \forall i > k\}$
\STATE Randomly initialize $\matr{U}_l \ge 0, \matr{S}_l \ge 0$.
\FOR{1000 iterations or until convergence}
\STATE update $\matr{U}_l$ using $\matr{U}_l=\matr{U}_l*\frac{(\matr{A}_l\matr{U}_l\matr{S}_l)_{ij}}{(\matr{U}_l\matr{U}_l^\top\matr{A}_l\matr{U}_l\matr{S}_l)_{ij}}$ 
\STATE update $\matr{S}_l$ using $\matr{U}_l=\matr{U}_l*\frac{(\matr{U}_l^\top\matr{A}_l\matr{U}_l)_{ij}}{(\matr{U}_l^\top\matr{U}_l\matr{S}_l\matr{U}_l^\top\matr{U}_l)_{ij}}$
\ENDFOR
\STATE $\matr{X}=[\matr{U}_1,\matr{U}_2,\ldots,\matr{U}_L]^\top \in \mathbb{R}^{m\times n}$, $m=\sum_{l=1}^L k_l$
\STATE $\matr{F}\leftarrow AgglomerativeHierarchicalClustering(\matr{X})$
\STATE $k_c=0$
\FOR{$i=2$ to $m-2$} 
\STATE $d_i\leftarrow \frac{\matr{F}(i,3)-\matr{F}(i-1,3)}{\matr{F}(i-1,3)}$
\IF{$d_i\ge 0.5$}
\IF{$max\{\matr{F}(i,1),\matr{F}(i,2)\} \le m$}
\STATE $k_{c}\leftarrow k_{c}+1$
\ELSE 
\STATE $k_{c}\leftarrow k_{c}$
\ENDIF 
\ELSE
\STATE $cut\leftarrow i$ and stop \textbf{for}
\ENDIF 
\ENDFOR
\STATE $C \leftarrow$find$(1+\sum_{j=1}^{l-1} k_j \le \matr{F}(1:cut,1:2) \le \sum_{j=1}^{l}k_j)$
\STATE $k_{p_l}\leftarrow k_l - |C|$, for $l\in\{1,2,\ldots,L\}$\\
\end{algorithmic}
\end{algorithm}

In this paper, a two-step approach is proposed to determine the number of communities per layer and the number of common communities. First, the number of communities per layer ($k_1$, $k_2$,\ldots, $k_L$), is found  using the eigengap rule \cite{liu2018global}. Next, ONMTF is applied  to each layer \cite{ding_orthogonal_2006} and  the low-rank embedding matrices, $\matr{U}_l\in \mathbb{R}^{n\times k_l}$, are obtained. Each element of the embedding matrices, $\matr{U}_l(i,j)$, represents the likelihood of node $i$  belonging to community $j$.  An agglomerative hierarchical clustering algorithm using  Euclidean distance is applied on the  rows of $X=[\matr{U}_1,\matr{U}_2,\ldots, \matr{U}_L]^\top\in \mathbb{R}^{m\times n}$, where $m=\sum_{l=1}^L k_l$, to obtain the number of common communities. At each
step of the algorithm, the two columns with the smallest distance are aggregated, and the distances between the newly formed cluster and the remaining ones are updated. A dendrogram like the one shown in Fig. \ref{fig:dend} can be used to represent the different iterations of this algorithm. The $m$ leaves of the dendrogram correspond to the total number of communities across the $L$ layers. 

This agglomerative hierarchical clustering algorithm outputs  a matrix $\matr{F}\in \mathbb{R}^{m-1 \times 3}$. The first two columns of $\matr{F}(i,:)$ correspond to the labels of the two leaves of the dendrogram that form cluster $m + i$ and the third column contains the distance between these two leaves. This distance matrix $\matr{F}$ is used in Algorithm \ref{alg:find-kc} to determine the number of common communities, $k_c$, and the number of private communities per layer, $k_{p_{l}}$. The algorithm iterates until the minimum distance between any two clusters increases by more than $50\%$ of the minimum distance from the previous iteration. Fig. \ref{fig:dend} shows the dendrogram of the hierarchical clustering of the columns of the embedding matrices $\matr{U}_1,\matr{U}_2,\matr{U}_3$ of a 3-layer network with $k_1=6$, $k_{2}=6$, and $k_{3}=5$ and the red line indicates where the algorithm stops. For this example, $k_c=3$, $k_{p_1}=3$, $k_{p_2}=4$, and $k_{p_3}=3$. 

\begin{figure}[h!]
            \centering
           \includegraphics[width=0.95\linewidth]{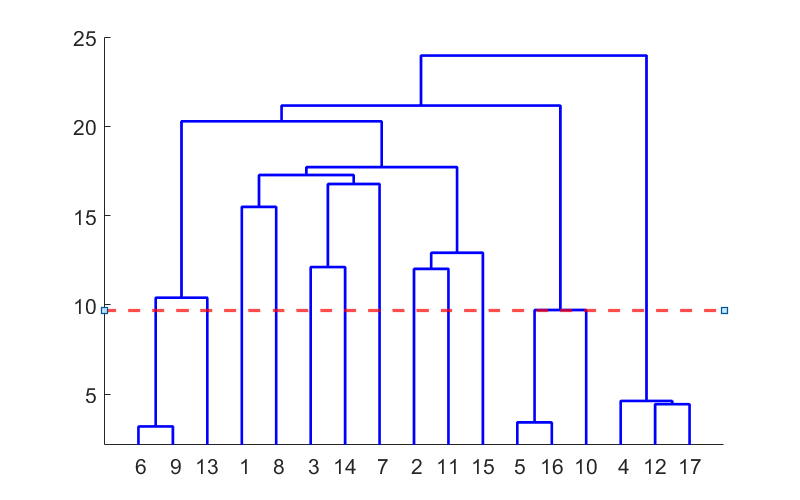}
            \caption{Example dendrogram illustrating the hierarchical clustering of the columns of the embedding matrices $\matr{U}_1,\matr{U}_2,\matr{U}_3$ for a 3-layer network with 3 common communities. The red line indicates where the algorithm stops.}
            \label{fig:dend}
\end{figure}

\subsection{Determining the common community labels for each layer} \label{ReadingH}
$\matr{H}\in \mathbb{R}^{n\times k_{c}}$  is the community membership matrix corresponding to the common communities. Each row of $\matr{H}$ indicates whether a particular node belongs to any of the $k_{c}$ common communities.  
Algorithm \ref{alg:readH} describes how the membership of the common communities is determined across layers, i.e., which columns of $\matr{H}$ contain information about the common communities present in layer $l$.

\begin{algorithm}[h!]
\caption{Identify the membership of $k_c$ common communities across layers $\{1,2,\ldots,L\}$}
\label{alg:readH}
\begin{algorithmic}[1]
\renewcommand{\algorithmicrequire}{\textbf{Input:}}
\REQUIRE Community membership matrices $\matr{H}, \matr{H}_l$, adjacency matrices $\matr{A}_l$, and number of private communities $k_{p_l}$, for each layer $l\in\{1,2,\ldots,L\}$.\\
\renewcommand{\algorithmicrequire}{\textbf{Output:}} 
\REQUIRE Layer specific community membership matrix $\matr{H}_{c_l}$ containing information about the common communities present in layer $l$. 
\FOR{$l=1$ to $L$}
\FOR{each node $i$}
\STATE $j^*\leftarrow argmax_j(\matr{[H,H}_l])_{ij}$
\STATE $idx(i)\leftarrow j^*$
\ENDFOR
\FOR{$j=1$ to $k_c$}
\STATE $\matr{H}(find(idx==j),j)=1$
\STATE $\matr{H}(find(idx\ne j),j)=0$
\ENDFOR
\FOR{$j=1$ to $k_c$}
\STATE $m=|\matr{H}(:,j)==1|$
\STATE $\matr{B}_0\leftarrow \matr{A}_l	\odot(\matr{H}(:,j)\matr{H}(:,j)^\top)$
\STATE $\matr{B}_1\leftarrow \matr{A}_l	\odot(\matr{1}_{n \times n}-\matr{H}(:,j)\matr{H}(:,j)^\top)$ 
\STATE $q(j)\leftarrow \frac{\sum_{v,w}(\matr{B}_1)_{vw}/(m(m-1))}{\sum_{v,w}(\matr{B}_0)_{vw}/(m(n-m))}$
\ENDFOR
\STATE $[q_{sorted}, J]\leftarrow sort($q$,descending)$. $J$ contains the sorted indices of the elements of $q$.
\STATE $pos\leftarrow J(1:k_{p_l})$.
\STATE $\matr{H}_{c_l}\leftarrow \matr{H}(:,[pos])$
\ENDFOR
\end{algorithmic}
\end{algorithm}

\subsection{Time complexity}
The time complexity of the proposed algorithm is mostly due  to the Multiplicative Updates Rules, Eqs. \eqref{eq:Hupdate}-\eqref{eq:Gupdate}. The time complexity
for the product of two matrices, e.g., the product of a $m \times k$ matrix by
a $k \times n$ matrix, is $\mathcal{O}(mkn)$. Table \ref{tab:complexity} shows 
the time complexities of Eqs. \eqref{eq:Hupdate}-\eqref{eq:Gupdate} and the total complexity, with $l\in\{1,2,\ldots,L\}$, and $k=k_c+ \sum_{l=1}^{L}k_{p_l}$.

\begin{table}[h!]
    \centering
    \begin{tabular}{cc}
    \hline
        $\matr{H}$ update (Eq. \eqref{eq:Hupdate}) & $\mathcal{O}(n^2(k_c+ \sum_{l=1}^{L}k_{p_l}))$\\
        $\matr{H}_l$ update (Eq. \eqref{eq:Hlupdate}) & $\mathcal{O}(n^2(k_c + k_{p_l}))$ \\
        $\matr{S}_l$ update (Eq. \eqref{eq:Supdate}) & $\mathcal{O}(n^2k_c)$\\
        $\matr{G}_l$ update (Eq. \eqref{eq:Gupdate}) & $\mathcal{O}(n^2k_{p_l})$\\
        \hline
        Total & $\mathcal{O}(n^2k)$\\
        \hline
        \end{tabular}
    \caption{Computational complexity of updating each variable per iteration.}
    \label{tab:complexity}
\end{table}

\vspace{-0.5cm}
\subsection{Storage complexity}

The storage complexity of our algorithm is determined by the sizes of the matrices $\matr{H}$, $\matr{H}_l$, $\matr{S}_l$, and $\matr{G}_l$. It can be seen that the total storage complexity is $\mathcal{O}(nk)$. For a multiplex network of size $n \times n \times L$, this is a significant reduction in memory cost.
\begin{table}[h]
    \centering
    \begin{tabular}{cc}
    \hline
        $\matr{H}$  & $\mathcal{O}(nk_c)$\\
        $\matr{H}_l$ & $\mathcal{O}(nk_{p_{l}})$ \\
        $\matr{S}_l$ & $\mathcal{O}({k_c}^2)$\\
        $\matr{G}_l$ & $\mathcal{O}({k_{p_{l}}}^2)$\\
        \hline
        Total & $\mathcal{O}(nk)$\\
        \hline
        \end{tabular}
    \caption{Storage complexity of each variable.}
    \label{tab:complexity}
\end{table}

\section{Convergence Analysis}\label{sec:convergence}

In this section, we will prove the convergence of the multiplicative update rule defined by Eq. \eqref{eq:Hupdate}  using the auxiliary function approach.  As the other update rules are similar, we will not explicitly prove their convergence. We first introduce the definition of auxiliary function as follows.

\noindent \textbf{Definition 1:} A function $Z(\matr{H},\matr{H}^t)$
is called an auxiliary function of $\matr{\mathcal{L}(H)}$ if it satisfies
\begin{center}
  $Z(\matr{H,H}^t)\ge \matr{\mathcal{L}(H)}$ and $Z(\matr{H,H})=\matr{\mathcal{L}(H)}$.
\end{center}
The auxiliary function is a useful concept because of the following lemma which is proved in \cite{lee_algorithms_2001}.

\begin{lemma}
If $Z$ is an auxiliary function, then $\mathcal{L}$ is non-increasing under the update
\begin{center}
    $\matr{H}^{t+1}=\underset{\matr{H}}{\operatorname{argmin}} Z(\matr{H,H}^t)$.
\end{center}
\end{lemma}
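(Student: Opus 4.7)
The plan is to prove the lemma directly from the two defining properties of the auxiliary function together with the minimization step, via a three-term chain of inequalities. No calculus, no KKT analysis, and no structure of $\mathcal{L}$ beyond what is encoded in $Z$ will be needed; the whole argument is essentially a rewriting of the definitions.

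First I would fix an iterate $\matr{H}^t$ and consider the next iterate $\matr{H}^{t+1} = \arg\min_{\matr{H}} Z(\matr{H},\matr{H}^t)$. Applying the upper-bound property $Z(\matr{H},\matr{H}^t)\ge \mathcal{L}(\matr{H})$ at the specific point $\matr{H}=\matr{H}^{t+1}$ yields
\begin{equation*}
\mathcal{L}(\matr{H}^{t+1}) \;\le\; Z(\matr{H}^{t+1},\matr{H}^t).
\end{equation*}
Next, because $\matr{H}^{t+1}$ is by construction a minimizer of the map $\matr{H}\mapsto Z(\matr{H},\matr{H}^t)$, evaluating this map at $\matr{H}=\matr{H}^t$ gives the second inequality
\begin{equation*}
Z(\matr{H}^{t+1},\matr{H}^t) \;\le\; Z(\matr{H}^t,\matr{H}^t).
\end{equation*}
Finally, the equality property $Z(\matr{H},\matr{H})=\mathcal{L}(\matr{H})$ at $\matr{H}=\matr{H}^t$ gives $Z(\matr{H}^t,\matr{H}^t)=\mathcal{L}(\matr{H}^t)$. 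Concatenating the three steps yields $\mathcal{L}(\matr{H}^{t+1})\le \mathcal{L}(\matr{H}^t)$, which is the desired monotonicity.

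There is essentially no hard part: the lemma is a template and the content is entirely in the hypotheses. The only mild subtlety worth flagging in the write-up is well-posedness of the update, i.e., that $\arg\min_{\matr{H}} Z(\matr{H},\matr{H}^t)$ is attained; for the quadratic upper bounds that will actually be exhibited for the objective in \eqref{eq:ObjFn}, the auxiliary function is strictly convex in $\matr{H}$ with a unique stationary point, so existence of the minimizer is automatic and the chain above is rigorous. The same argument, applied coordinate-block by coordinate-block, will carry over verbatim to the updates \eqref{eq:Hlupdate}--\eqref{eq:Gupdate}, which is why the authors only need to treat \eqref{eq:Hupdate} explicitly.
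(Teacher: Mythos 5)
Your proof is correct: the three-term chain $\mathcal{L}(\matr{H}^{t+1}) \le Z(\matr{H}^{t+1},\matr{H}^t) \le Z(\matr{H}^t,\matr{H}^t) = \mathcal{L}(\matr{H}^t)$ is exactly the standard argument from Lee and Seung, which is what the paper relies on (it does not prove the lemma itself but cites that reference). Your additional remark about attainment of the minimizer is a reasonable refinement but not part of the cited proof.
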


\begin{theorem}
Given $\matr{H}_l$, $\matr{S}_l$, and $\matr{G}_l$ the Lagrangian function $\matr{\mathcal{L}(H)}$ is monotonically decreasing under the update rule \eqref{eq:Hupdate}. 
\end{theorem}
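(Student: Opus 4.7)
The plan is to invoke the auxiliary function framework of Lee and Seung, adapted to the orthogonality-constrained trifactorization setting exactly as in Ding et al.\ \cite{ding_orthogonal_2006}. Since the preceding lemma reduces the convergence claim to exhibiting a function $Z(\matr{H},\matr{H}^t)$ with $Z(\matr{H},\matr{H}^t)\ge \mathcal{L}(\matr{H})$, $Z(\matr{H}^t,\matr{H}^t)=\mathcal{L}(\matr{H}^t)$, and $\mathrm{argmin}_{\matr{H}} Z(\matr{H},\matr{H}^t)$ coinciding with update \eqref{eq:Hupdate}, the entire proof reduces to (i) writing $\mathcal{L}(\matr{H})$ as a sum of trace terms with identifiable positive and negative parts, (ii) bounding each part with an appropriate surrogate, and (iii) solving a diagonal system for the minimizer.

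First I would fix $\matr{H}_l,\matr{S}_l,\matr{G}_l$ and expand the Frobenius norms in \eqref{eq:Lagrangian} via $\|\matr{X}\|_F^2 = tr(\matr{X}^\top \matr{X})$ to collect the $\matr{H}$-dependent trace terms. After substituting the KKT expression for $\matr{\Lambda}$ derived in (i) of the main text, the resulting gradient is precisely Eq.~\eqref{eq:gradient2}, which decomposes cleanly as $\nabla_{\matr{H}}\mathcal{L}=\nabla^+ - \nabla^-$, with $\nabla^+$ collecting the terms in the denominator of \eqref{eq:Hupdate} and $\nabla^-$ those in the numerator. I would then construct $Z(\matr{H},\matr{H}^t)$ by applying two standard inequalities: for nonnegative $\matr{X},\matr{Y}$ the quadratic-in-$\matr{H}$ positive terms are dominated by
\begin{equation*}
tr(\matr{H}^\top \matr{X}\matr{H}\matr{Y}) \le \sum_{ij} \frac{(\matr{X}\matr{H}^t\matr{Y})_{ij}\,H_{ij}^2}{H^t_{ij}},
\end{equation*}
while the linear terms which enter $\mathcal{L}$ with a minus sign are dominated via the log bound
\begin{equation*}
tr(\matr{H}^\top \matr{C}) \ge \sum_{ij} C_{ij}\,H^t_{ij}\Bigl(1 + \log\frac{H_{ij}}{H^t_{ij}}\Bigr).
\end{equation*}
Applying the first inequality to the two contributions comprising $\nabla^+$ and the (flipped) second inequality to the two contributions comprising $\nabla^-$ yields a $Z$ that matches $\mathcal{L}$ along the diagonal $\matr{H}=\matr{H}^t$ and dominates it elsewhere.

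Next I would set $\partial Z/\partial H_{ij}=0$ entrywise. The quadratic surrogates contribute a term proportional to $\nabla^+_{ij}\,H_{ij}/H^t_{ij}$ and the logarithmic surrogates contribute a term proportional to $\nabla^-_{ij}\,H^t_{ij}/H_{ij}$. Solving the resulting equation for $H_{ij}$ reproduces exactly the update \eqref{eq:Hupdate}, once one observes that the orthogonality constraint $\matr{H}^\top\matr{H}=\matr{I}$, already baked in through the $\matr{\Lambda}$ substitution, forces the positive and negative surrogate coefficients to share the factor that would otherwise yield a square-root ratio, collapsing them to the first-order ratio appearing in the update rule. Chaining with Lemma 1 then gives $\mathcal{L}(\matr{H}^{t+1}) \le Z(\matr{H}^{t+1},\matr{H}^t) \le Z(\matr{H}^t,\matr{H}^t) = \mathcal{L}(\matr{H}^t)$, which is the claimed monotonic decrease.

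The hard part will be the cubic term $\sum_l \matr{H}\matr{H}^\top \matr{H}_l\matr{G}_l^\top\matr{H}_l^\top\matr{H}\matr{S}_l$ in \eqref{eq:gradient2}, which arose from the $\matr{\Lambda}$ substitution rather than directly from the original objective. Fitting this term into the positive/negative decomposition so that the resulting bound is tight at $\matr{H}=\matr{H}^t$ and produces a stationary equation matching \eqref{eq:Hupdate} (rather than a spurious square-root ratio) requires a careful choice of surrogate, analogous to the construction Ding et al.\ employ for their orthogonal symmetric tri-factorization; the sum over $l\in\{1,\ldots,L\}$ goes through without change since all the inequalities are linear in the summands. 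Modulo this bookkeeping, the remainder of the argument is routine.
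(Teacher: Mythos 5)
Your scaffolding (auxiliary function plus Lemma 1) is the same as the paper's, but the surrogate you propose does not prove the stated theorem, and the failure point is exactly the step you defer as ``bookkeeping.'' The bounds you invoke are the standard Ding et al.\ construction: a quadratic-over-linear upper bound for terms of the form $tr(\matr{H}^\top\matr{X}\matr{H}\matr{Y})$ and a logarithmic lower bound for terms of the form $tr(\matr{H}^\top\matr{C})$ with $\matr{X},\matr{Y},\matr{C}$ constant. Two problems follow. First, after the multiplier substitution, both the positive part $\nabla^+$ (the denominator of \eqref{eq:Hupdate}) and the negative part $\nabla^-$ (the numerator) contain contributions that are cubic in $\matr{H}$ ($\matr{H}\matr{H}^\top\matr{A}_l\matr{H}\matr{S}_l$ and $\matr{H}\matr{H}^\top\matr{H}_l\matr{G}_l^\top\matr{H}_l^\top\matr{H}\matr{S}_l$ in \eqref{eq:gradient2}), i.e., they come from quartic trace terms to which neither of your two bounds applies as stated. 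Second, and decisively, even where the construction does apply, its stationarity condition has the form $c\,\nabla^+_{ij}H_{ij}/H^t_{ij}=\nabla^-_{ij}H^t_{ij}/H_{ij}$, whose solution is $H_{ij}=H^t_{ij}\sqrt{\nabla^-_{ij}/(c\,\nabla^+_{ij})}$ --- a square-root multiplicative update. This is precisely why the orthogonal tri-factorization updates in Ding et al.\ all carry square roots. The theorem, however, asserts monotone decrease under the \emph{first-order} ratio \eqref{eq:Hupdate}. Your claim that orthogonality ``collapses'' the square root is not a mechanism: the orthogonality constraint was already consumed in producing \eqref{eq:gradient2} via the $\matr{\Lambda}$ substitution, and nothing in the surrogate cancels the exponent $1/2$ or the constant $c$. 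The two updates share fixed points, but monotonicity under one does not transfer to the other, so the gap is fatal for the statement as given.

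The paper closes this gap with a different surrogate, built entrywise in the scalar $h=H_{ij}$ rather than from trace inequalities. It writes the second-order Taylor expansion of $\mathcal{L}(h)$ at $h^t_{ij}$ and takes as auxiliary function \eqref{eq:auxiliary}: the expansion with the linear coefficient tripled and the curvature replaced by $3\sum_{l}(4\matr{H}_l\matr{G}_l\matr{H}_l^\top\matr{H}^t\matr{S}_l+4\matr{H}^t{\matr{H}^t}^\top\matr{A}_l\matr{H}^t\matr{S}_l)_{ij}/h^t_{ij}$, i.e., the (scaled) denominator of \eqref{eq:Hupdate} divided by $h^t_{ij}$. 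Appendix A then shows this coefficient dominates $\mathcal{L}''(h^t_{ij})$ through elementwise inequalities such as $(\matr{H}_l\matr{G}_l\matr{H}_l^\top\matr{H}^t\matr{S}_l)_{ij}/h^t_{ij}\ge(\matr{H}_l\matr{G}_l\matr{H}_l^\top)_{ii}\matr{S}_{l_{jj}}$, which is what makes $Z\ge\mathcal{L}$. Crucially, minimizing this quadratic-in-$h$ surrogate and substituting $\mathcal{L}'(h^t_{ij})$ makes the $\matr{H}_l\matr{G}_l\matr{H}_l^\top\matr{H}^t\matr{S}_l$ and $\matr{H}^t{\matr{H}^t}^\top\matr{A}_l\matr{H}^t\matr{S}_l$ terms cancel between the gradient and the curvature, leaving exactly the first-order ratio \eqref{eq:Hupdate} with no square root. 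To salvage your route you would either have to prove the theorem for the square-root variant of the update (a different statement than the one claimed), or replace your quadratic/log surrogate with a curvature-dominating entrywise quadratic one as the paper does.
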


\begin{proof}
For convenience, let $\mathcal{L}(h)$ denote the part of  $\mathcal{L}(\matr{H})$ dependent on $H_{ij}$. From Eq. \eqref{eq:gradient2}  we have 
\vspace{-0.1cm}
\begin{align*}
\begin{split}
    &\mathcal{L'}(h)=\sum_{l=1}^{L}(\matr{4H}_{l}\matr{G}_{l}^\top\matr{H}_{l}^\top\matr{HS}_{l} - 4\matr{A}_{l}\matr{HS}_{l}\\ &+ 4\matr{HH}^\top\matr{A}_{l}\matr{HS}_{l}\matr{-4H H}^\top\matr{H}_{l}\matr{G}_{l}^\top\matr{H}_{l}^\top\matr{HS}_{l})_{ij}.\\
\end{split}
\end{align*}
The second-order derivative of $\mathcal{L}(h)$ with respect to $h_{ij}$ is
\begin{align*}
\begin{split}
\mathcal{L''}(h)=\sum_{l=1}^{L}\{4(\matr{H}_l\matr{G}_l\matr{H}_l^\top)_{ii}\matr{S}_{l_{jj}}-4\matr{A}_{l_{ii}}\matr{S}_{l_{jj}} \\+ 4[(\matr{H}^\top\matr{A}_l\matr{HS}_l)_{ij}+h_{ij}(\matr{A}_l\matr{HS}_l)_{ij}+(\matr{HH}^\top\matr{A}_l)_{ii}\matr{S}_{l_{jj}}]\\ 
-4[(\matr{H}^\top\matr{H}_l\matr{G}_l\matr{H}_l^\top\matr{HS}_l)_{ij}+h_{ij}(\matr{H}_l\matr{G}_l\matr{H}_l^\top\matr{HS}_l)_{ij}\\+(\matr{HH}^\top\matr{H}_l\matr{G}_l\matr{H}_l^\top)_{ii}\matr{S}_{l_{jj}}]\}.
\end{split}
\end{align*}

Let $h^t_{ij}$ denote the updated value of $h_{ij}$ after the $t$th iteration, then the Taylor series expansion of $\mathcal{L}(h)$ at $h^t_{ij}$ can be written as
\begin{align*}
\begin{split}
\mathcal{L}(h)=\mathcal{L}(h^t_{ij})+\mathcal{L'}(h^t_{ij})(h-h^t_{ij})+ \frac{1}{2}\mathcal{L''}(h_{ij}^t)(h-h^t_{ij})^2.
\end{split}
\end{align*}
Now, the key is to find an appropriate auxiliary function $Z(h,h_{ij}^t)$. We choose the following $Z(h,h_{ij}^t)$ and prove in Appendix A, that it satisfies the conditions to be an auxiliary function of $\mathcal{L}(h)$. 

\begin{align}
    \begin{split}
    &Z(h,h_{ij}^t)=\mathcal{L}(h^t_{ij})+3\mathcal{L'}(h^t_{ij})(h-h^t_{ij})+ \\
    &\frac{3}{2}\frac{\sum \limits_{l=1}^{L}(4\matr{H}^t_l\matr{G}_l{\matr{H}_l^t}^\top\matr{H}^t\matr{S}_l+4\matr{H}^t{\matr{H}^t}^\top\matr{A}_l\matr{H}^t\matr{S}_l)_{ij}}{h_{ij}^t}(h-h^t_{ij})^2.
    \end{split}
    \label{eq:auxiliary}
\end{align}

According to Lemma 1, we must find the minimum of $Z(h,h_{ij}^t)$ with respect to $h$.\\

\begin{align*}
    \frac{\partial Z(h,h_{ij}^t)}{\partial h}=&3\mathcal{L'}(h^t_{ij})
    +3\frac{\sum_{l=1}^{L}(4\matr{H}_l\matr{G}_l\matr{H}_l^\top\matr{H}^t\matr{S}_l)_{ij}}{h_{ij}^t}(h-h^t_{ij})\\ +&3\frac{\sum_{l=1}^{L}(4\matr{H}^t{\matr{H}}^{t^\top}\matr{A}_l\matr{H}^t\matr{S}_l)_{ij}}{h_{ij}^t}(h-h^t_{ij})=0
\end{align*}

Replacing $\mathcal{L'}(h^t_{ij})$ in the equation above and canceling the common terms, we obtain

\begin{align*}
    \begin{split}
        \sum_{l=1}^{L}(-4\matr{A}_{l}\matr{H}^t\matr{S}_{l} \matr{-4H}^t{\matr{H}}^{t^\top}\matr{H}_{l}\matr{G}_{l}^\top\matr{H}_{l}^\top\matr{H}^t\matr{S}_{l})_{ij} \\+ \sum_{l=1}^{L}(4\matr{H}_l\matr{G}_l\matr{H}_l^\top\matr{H}^t\matr{S}_l+4\matr{H}^t{\matr{H}}^{t^\top}\matr{A}_l\matr{H}^t\matr{S}_l)_{ij}\frac{h}{h_{ij}^t}=0.
    \end{split}
\end{align*}

Replacing $h$ by $h_{ij}^{t+1}$ we obtain the following update rule \\

\begin{align*}
    h^{t+1}_{ij}=h^t_{ij}\frac{\sum_{l=1}^L(\matr{A}_l\matr{H}^t\matr{S}_l+\matr{H}^t{\matr{H}}^{t^\top}\matr{H}_l\matr{G}_l^\top\matr{H}_l^\top\matr{H}^t\matr{S}_l)_{ij}}{\sum_{l=1}^L(\matr{H}_l\matr{G}_l^\top\matr{H}_l^\top\matr{H}^t\matr{S}_l+\matr{H}^t{\matr{H}}^{t^\top}\matr{A}_l\matr{H}^t\matr{S}_l)_{ij}},
\end{align*}

which is the same as the update rule shown in Eq. \eqref{eq:Hupdate}.
\end{proof}

\section{Recovery Guarantees}\label{sec:recovery}


In this section, we will establish the theoretical properties of the proposed community detection method. We want to determine if, by optimizing the objective function, our algorithm will return good community structures. In particular, we will investigate the consistency properties of the global optimizer of the objective function under the multilayer stochastic blockmodel (MLSBM).
The optimization problem in \eqref{eq:ObjFn} can be rewritten as

\begin{equation}
\begin{aligned}
\label{eq:ObjFnEq}
    \underset{\matr{H'}_{l},\matr{F}_{l}}{\operatorname{argmin}} \sum_{l=1}^{L} || \matr{A}_{l}-\matr{H'}_{l}\matr{F}_{l}\matr{H'}_{l}^\top ||^2_{F}
    \mbox{\hspace{0.2cm}s.t\hspace{0.2cm}} \matr{H'}_{l}^\top\matr{H'}_{l}=\matr{I}, 
\end{aligned}
\end{equation}

\noindent where $\matr{F}_l$ is a block matrix defined as 

\[ \matr{F}_l=
\begin{bmatrix}
\matr{S}_l & \bf{0}\\
\bf{0} & \matr{G}_l
\end{bmatrix}
\]
and $\matr{H'}_l=[\matr{H}|\matr{H}_l]$ is the concatenation of the community membership matrices of the common and private communities, $\matr{H}\in \mathbb{R}^{n\times kc}$ and $\matr{H}_{l}\in \mathbb{R}^{n\times k_{p_l}}$, respectively. 

For the  $n\times n\times L$ population adjacency
tensor $\mathcal{A} = \{\matr{A}_1,...,\matr{A}_L\}$, we can define a multiplex SBM $[\matr{Z},\Theta]$ as in \cite{ali2019latent}, with each of the $L$ slices $\matr{A}_l\in \mathbb{R}^{n\times n}$. The mulitplex SBM with parameters $[\mathcal{Z}=\{\matr{Z}_1,...,\matr{Z}_L\},\mathcal{B} = \{\matr{\Theta}_1,...,\matr{\Theta}_L\}]$, can be
written in the matrix form as,

\begin{equation*}
     \mathbb{E}(\matr{A}_l)= \matr{Z}_l\matr{\Theta}_l\matr{Z}_l^\top, 
\end{equation*}

\noindent with $\matr{\Theta}_l\in [0,1]^{(kc+k_{p_l})\times(kc+k_{p_l})}$ and $\matr{Z}_l\in \{0,1\}^{n\times (kc+k_{p_l})}$ for each layer $l$.  $\matr{\Theta}_l$ is a block matrix defined as 

\[ \matr{\Theta}_l=
\begin{bmatrix}
\matr{\Theta}_c & \bf{0}\\
\bf{0} & \matr{\Theta}_{p_l}
\end{bmatrix},
\]

\noindent with $\matr{\Theta}_c$ and $\matr{\Theta}_{p_l}$ being the affinity probability matrices of the common and private communities, respectively.  $\matr{Z}_l=[\matr{Z}_c|\matr{Z}_{p_l}]$ is the concatenation of the community membership matrices of the common and private communities, $\matr{Z}_c\in \mathbb{R}^{n\times kc}$ and $\matr{Z}_{p_l}\in \mathbb{R}^{n\times k_{p_l}}$, respectively.  

To prove that our method can correctly recover the community assignments, we propose the following lemma following the work in \cite{paul2020spectral}.

\begin{lemma}
The optimization problem in \eqref{eq:ObjFn} applied to $\mathcal{A}$
has $\matr{H'}_l= \matr{Z}_l(\matr{Z}_l^\top\matr{Z}_l)^{-1/2}$ and $\matr{F}_l= (\matr{Z}_l^\top\matr{Z}_l)^{1/2}\matr{\Theta}_l(\matr{Z}_l^\top\matr{Z}_l)^{1/2}$, $l = 1,...,L$, as the unique
solution up to an orthogonal matrix, provided at least one of the $\matr{\Theta}_l$ is
full rank.
\label{lem:recovery}
\end{lemma}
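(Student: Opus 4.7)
The plan is to first verify that the proposed pair $(\matr{H'}_l,\matr{F}_l)$ attains zero objective in \eqref{eq:ObjFnEq} and is therefore a global minimizer of the nonnegative sum, then to argue uniqueness up to an orthogonal rotation by a column-space argument enabled by the full-rank hypothesis on $\matr{\Theta}_{l^\star}$ for some $l^\star$.

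For optimality, I would first check feasibility. Since $\matr{Z}_l$ has linearly independent columns (every community is occupied), $(\matr{Z}_l^\top\matr{Z}_l)^{-1/2}$ is well defined, and a direct computation gives
\begin{align*}
\matr{H'}_l^\top\matr{H'}_l = (\matr{Z}_l^\top\matr{Z}_l)^{-1/2}(\matr{Z}_l^\top\matr{Z}_l)(\matr{Z}_l^\top\matr{Z}_l)^{-1/2} = \matr{I}.
\end{align*}
Substituting into the reconstruction yields $\matr{H'}_l\matr{F}_l\matr{H'}_l^\top=\matr{Z}_l\matr{\Theta}_l\matr{Z}_l^\top=\mathbb{E}(\matr{A}_l)$, so every Frobenius term vanishes and the global minimum of zero is attained. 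Moreover, since common and private community indicators have disjoint support, $\matr{Z}_c^\top\matr{Z}_{p_l}=\matr{0}$, hence $\matr{Z}_l^\top\matr{Z}_l$ is block diagonal and its inverse square root inherits this block structure; this forces the first $k_c$ columns of $\matr{H'}_l$ to equal $\matr{Z}_c(\matr{Z}_c^\top\matr{Z}_c)^{-1/2}$ independently of $l$, certifying the shared-block constraint implicit in \eqref{eq:ObjFn}.

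For uniqueness, I would suppose $(\widetilde{\matr{H'}}_l,\widetilde{\matr{F}}_l)$ is another minimizer, whence $\widetilde{\matr{H'}}_l\widetilde{\matr{F}}_l\widetilde{\matr{H'}}_l^\top=\matr{Z}_l\matr{\Theta}_l\matr{Z}_l^\top$ for every $l$. At the layer $l^\star$ where $\matr{\Theta}_{l^\star}$ has full rank, $\mathrm{rank}(\matr{Z}_{l^\star}\matr{\Theta}_{l^\star}\matr{Z}_{l^\star}^\top)=k_c+k_{p_{l^\star}}$, which equals the rank of $\matr{Z}_{l^\star}$; so the column spans of $\widetilde{\matr{H'}}_{l^\star}$ and $\matr{Z}_{l^\star}(\matr{Z}_{l^\star}^\top\matr{Z}_{l^\star})^{-1/2}$ agree. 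Both have $k_c+k_{p_{l^\star}}$ orthonormal columns spanning the same subspace, so $\widetilde{\matr{H'}}_{l^\star}=\matr{Z}_{l^\star}(\matr{Z}_{l^\star}^\top\matr{Z}_{l^\star})^{-1/2}\matr{Q}_{l^\star}$ for some orthogonal $\matr{Q}_{l^\star}$. Left- and right-multiplying the factorization by $\widetilde{\matr{H'}}_{l^\star}^\top$ and $\widetilde{\matr{H'}}_{l^\star}$ then identifies $\widetilde{\matr{F}}_{l^\star}=\matr{Q}_{l^\star}^\top(\matr{Z}_{l^\star}^\top\matr{Z}_{l^\star})^{1/2}\matr{\Theta}_{l^\star}(\matr{Z}_{l^\star}^\top\matr{Z}_{l^\star})^{1/2}\matr{Q}_{l^\star}$, matching the claimed form up to the orthogonal $\matr{Q}_{l^\star}$.

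The most delicate step I anticipate is propagating this identification to the other layers while honoring the shared-block constraint. Because the first $k_c$ columns of every $\widetilde{\matr{H'}}_l$ must coincide with a single common block $\matr{H}$, the rotation $\matr{Q}_{l^\star}$ must respect the block-diagonal structure of $\matr{F}_{l^\star}$; a short calculation should show $\matr{Q}_{l^\star}=\mathrm{diag}(\matr{Q}_c,\matr{Q}_{p_{l^\star}})$ for orthogonal $\matr{Q}_c$ and $\matr{Q}_{p_{l^\star}}$. The common rotation $\matr{Q}_c$ then propagates to the first $k_c$ columns of every $\widetilde{\matr{H'}}_l$ through $\matr{H}$, while each private block is free to rotate independently. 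Verifying that the block-diagonality of $\matr{Q}_{l^\star}$ is forced rather than assumed is the main technical hurdle; once this is established, assembling the per-layer identifications yields the claimed unique-up-to-orthogonal form.
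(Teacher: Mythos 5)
Your proposal is correct in all the steps you execute, and on the uniqueness part it takes a genuinely different—and more rigorous—route than the paper. The optimality half is essentially identical: both you and the paper verify feasibility via $\matr{H'}_l^\top\matr{H'}_l=(\matr{Z}_l^\top\matr{Z}_l)^{-1/2}\matr{Z}_l^\top\matr{Z}_l(\matr{Z}_l^\top\matr{Z}_l)^{-1/2}=\matr{I}$ and observe that substitution collapses each term to $\|\matr{A}_l-\matr{Z}_l\matr{\Theta}_l\matr{Z}_l^\top\|_F^2=0$ on the population tensor. The divergence is in uniqueness. The paper only considers alternative solutions \emph{pre-supposed} to have the form $\matr{H'}_l\matr{P}$ with $\matr{P}$ nonsingular, and shows the orthogonality constraint forces $\matr{P}^\top\matr{P}=\matr{I}$; it never justifies that every zero-objective solution is of that form, which is precisely where the full-rank hypothesis on $\matr{\Theta}_{l^\star}$ must enter (it is stated "by assumption" but never visibly used). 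Your column-span argument supplies exactly this missing link: full rank of $\matr{\Theta}_{l^\star}$ gives $\mathrm{rank}(\matr{Z}_{l^\star}\matr{\Theta}_{l^\star}\matr{Z}_{l^\star}^\top)=k_c+k_{p_{l^\star}}$, so any alternative $\widetilde{\matr{H'}}_{l^\star}$ with orthonormal columns must span $\mathrm{col}(\matr{Z}_{l^\star})$ and hence equal $\matr{Z}_{l^\star}(\matr{Z}_{l^\star}^\top\matr{Z}_{l^\star})^{-1/2}\matr{Q}_{l^\star}$ for orthogonal $\matr{Q}_{l^\star}$, after which $\widetilde{\matr{F}}_{l^\star}$ is determined by conjugation. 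What your approach buys is a proof that actually rules out arbitrary competing minimizers rather than only rotated copies of the claimed one; what the paper's approach buys is brevity, plus the closing observation (which you omit) that $(\matr{Z}_l^\top\matr{Z}_l)^{-1/2}$ is diagonal with positive entries, so distinct community assignments yield distinct rows of $\matr{H'}_l$—the point that makes the "up to an orthogonal matrix" caveat harmless for community recovery. Finally, the cross-layer propagation and block-diagonality of $\matr{Q}_{l^\star}$ that you flag as the main hurdle is a real issue for layers whose $\matr{\Theta}_l$ is rank-deficient, but be aware the paper's own proof does not address it either; your proposal is not weaker than the published argument on this point, only more candid about the gap.
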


\begin{proof}
To prove lemma \ref{lem:recovery}, we can show that $\matr{H'}_l= \matr{Z}_l(\matr{Z}_l^\top\matr{Z}_l)^{-1/2}$, $\matr{F}_l= (\matr{Z}_l^\top\matr{Z}_l)^{1/2}\matr{\Theta}_l(\matr{Z}_l^\top\matr{Z}_l)^{1/2}$ is a
solution to the optimization problem in \eqref{eq:ObjFnEq}. Substituting the solution to \eqref{eq:ObjFnEq}, we have

\footnotesize{
\begin{equation*}
\centering
    \begin{split}
        \sum_{l=1}^{L} || \matr{A}_{l}-\matr{Z}_l(\matr{Z}_l^\top\matr{Z}_l)^{-1/2}&(\matr{Z}_l^\top\matr{Z}_l)^{1/2}\matr{\Theta}_l(\matr{Z}_l^\top\matr{Z}_l)^{1/2}(\matr{Z}_l^\top\matr{Z}_l)^{-1/2}\matr{Z}_l^\top||^2_{F}\\
        &=\sum_{l=1}^{L} || \matr{A}_{l}-\matr{Z}_l\matr{\Theta}_l\matr{Z}_l^\top ||^2_{F}
    \end{split}
\end{equation*}}

\normalsize

\noindent and, since $\mathbb{E}(\matr{A}_l)= \matr{Z}_l\matr{\Theta}_l\matr{Z}_l^\top$, the value of this minimization objective function is 0, and $\matr{H'}_l^\top\matr{H'}_l= (\matr{Z}_l^\top\matr{Z}_l)^{-1/2}\matr{Z}_l^\top \matr{Z}_l(\matr{Z}_l^\top\matr{Z}_l)^{-1/2}=(\matr{Z}_l^\top\matr{Z}_l)^{1/2}(\matr{Z}_l^\top\matr{Z}_l)^{-1/2}=\matr{I}$

\begin{figure*}[ht]
            \centering \begin{subfigure}[b]{0.9\linewidth} \centering\includegraphics[width=0.9\linewidth]{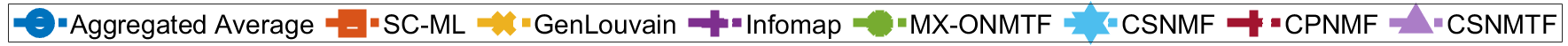}
            \end{subfigure}
            \\
            \centering \begin{subfigure}[b]{0.32\linewidth} \includegraphics[width=0.99\linewidth]{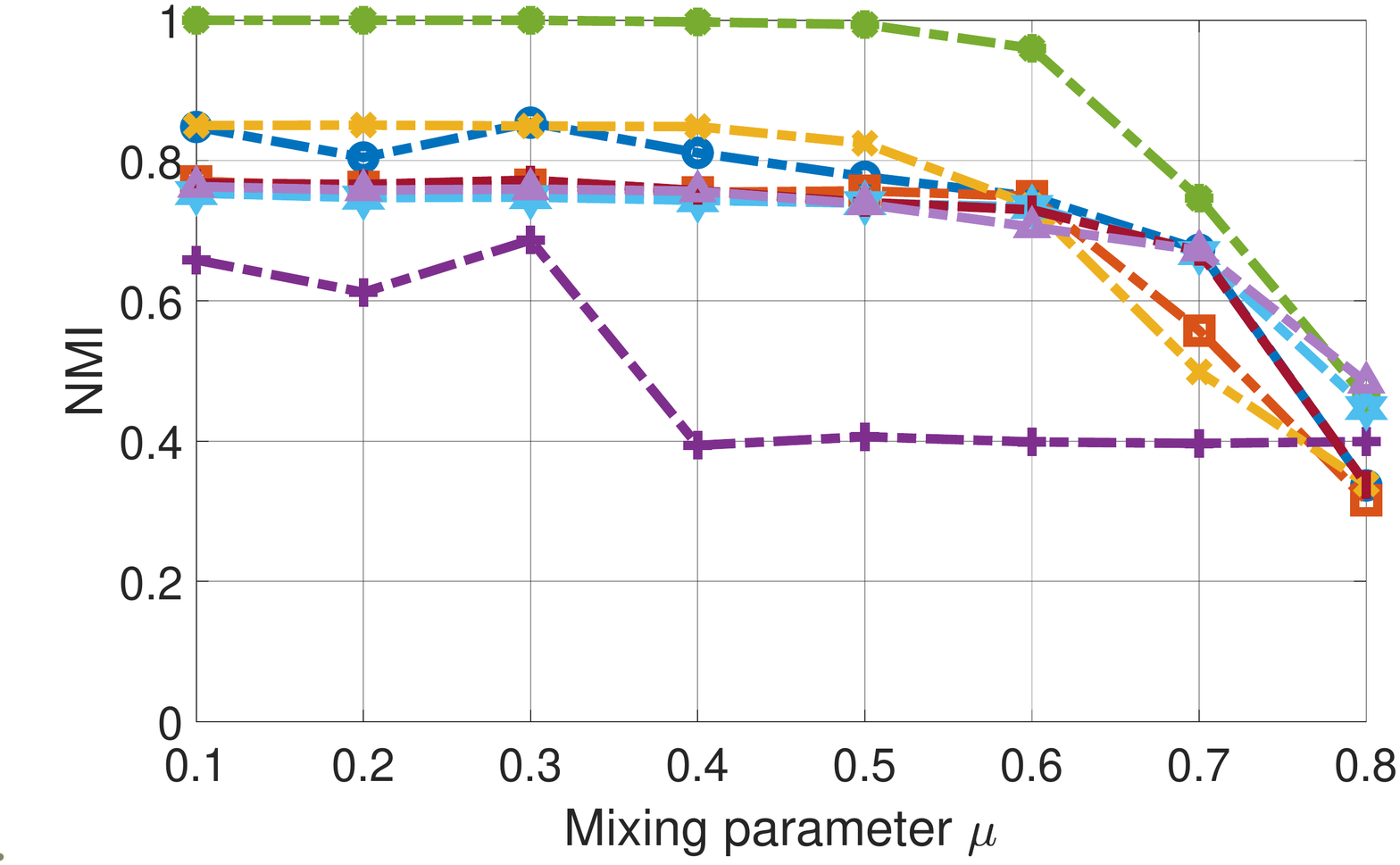}
            \caption{}
            \label{fig:2cc3L}
            \end{subfigure}
            \hfill
          \begin{subfigure}[b]{0.32\linewidth} \includegraphics[width=0.99\linewidth]{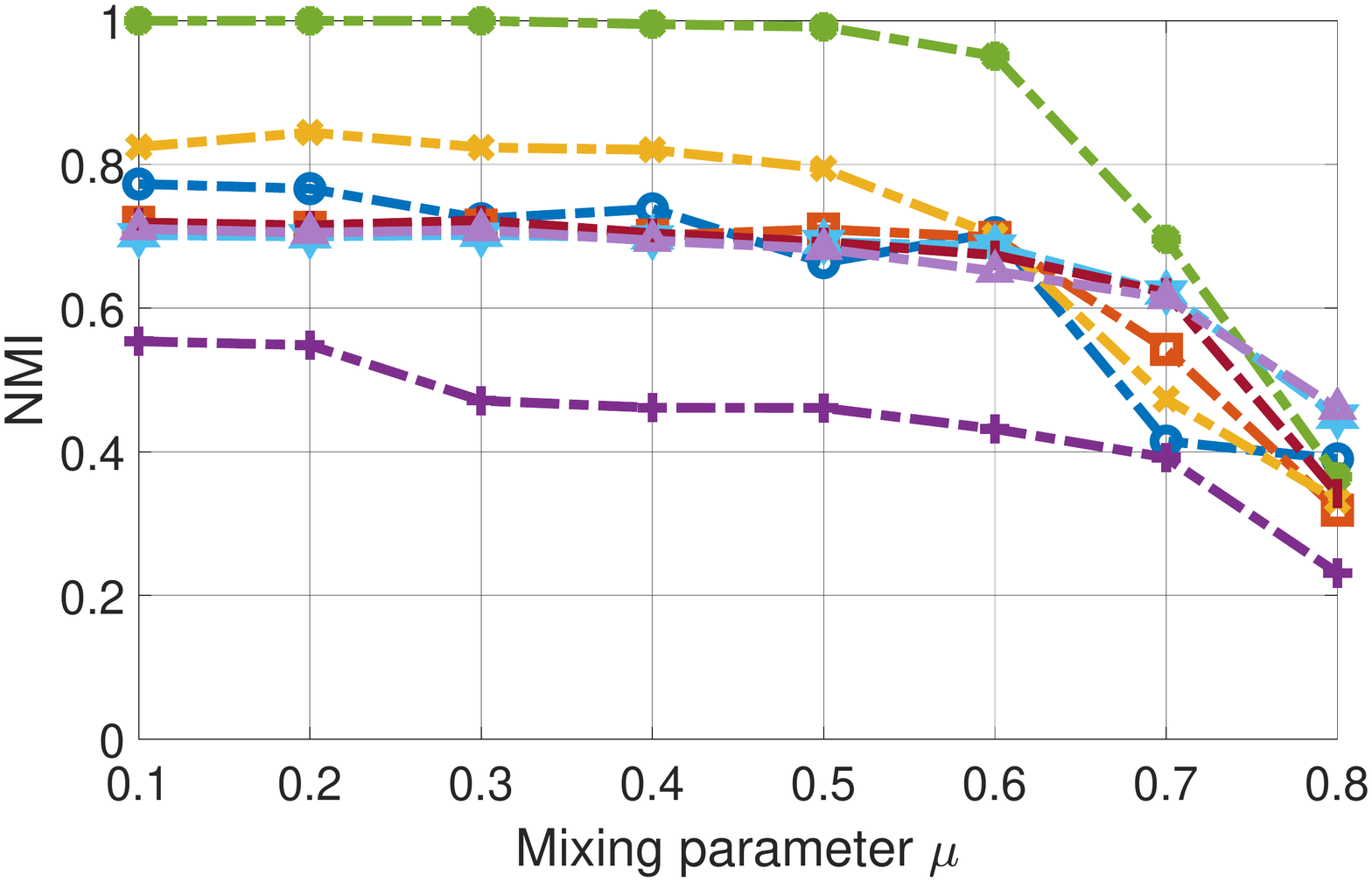}
            \caption{}
            \label{fig:2cc4L}
            \end{subfigure}
            \hfill
          \begin{subfigure}[b]{0.32\linewidth} \includegraphics[width=0.99\linewidth]{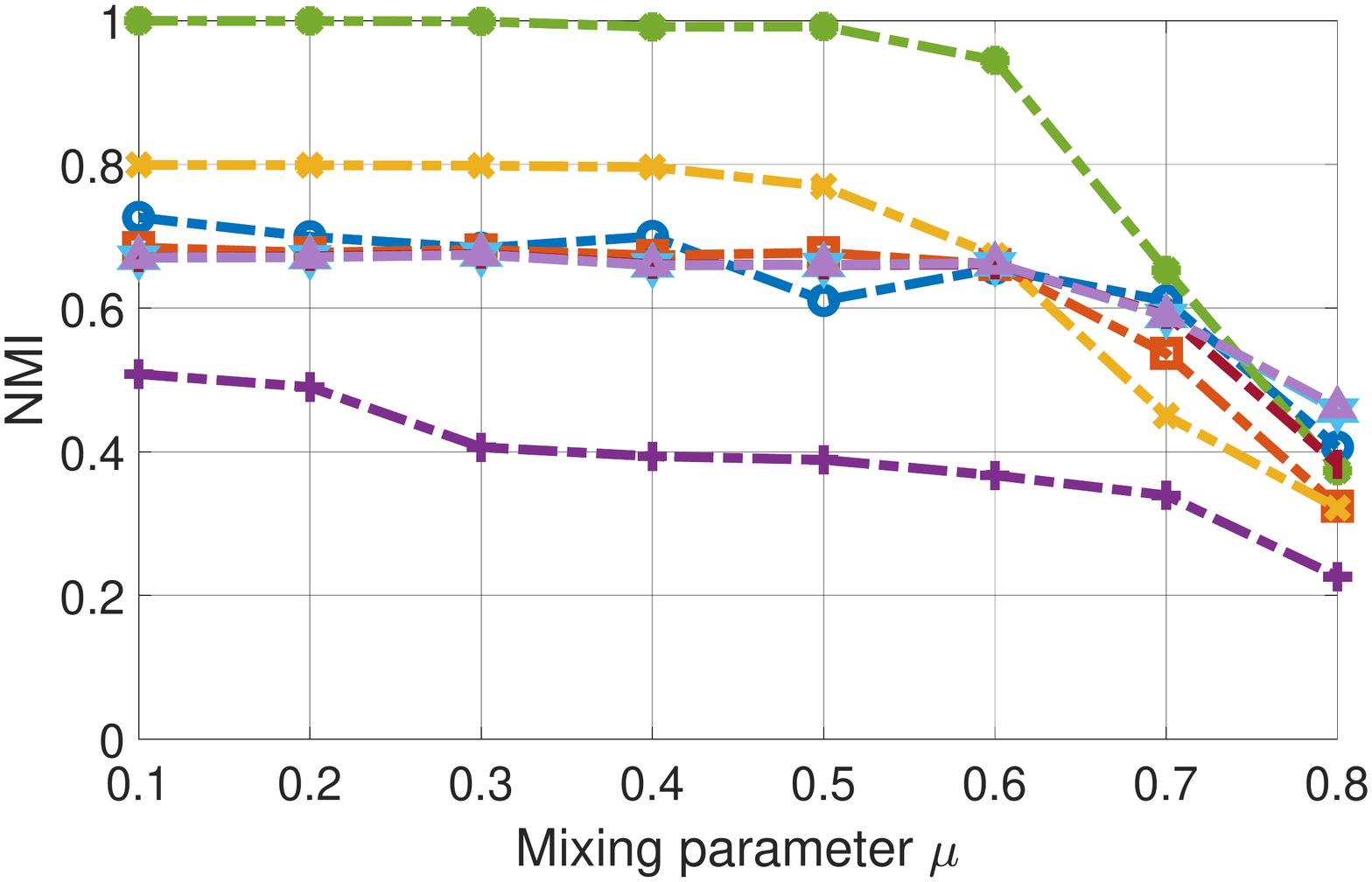}
            \caption{}
            \label{fig:2cc5L}
            \end{subfigure}
            \\
            \begin{subfigure}[b]{0.32\linewidth} \includegraphics[width=0.99\linewidth]{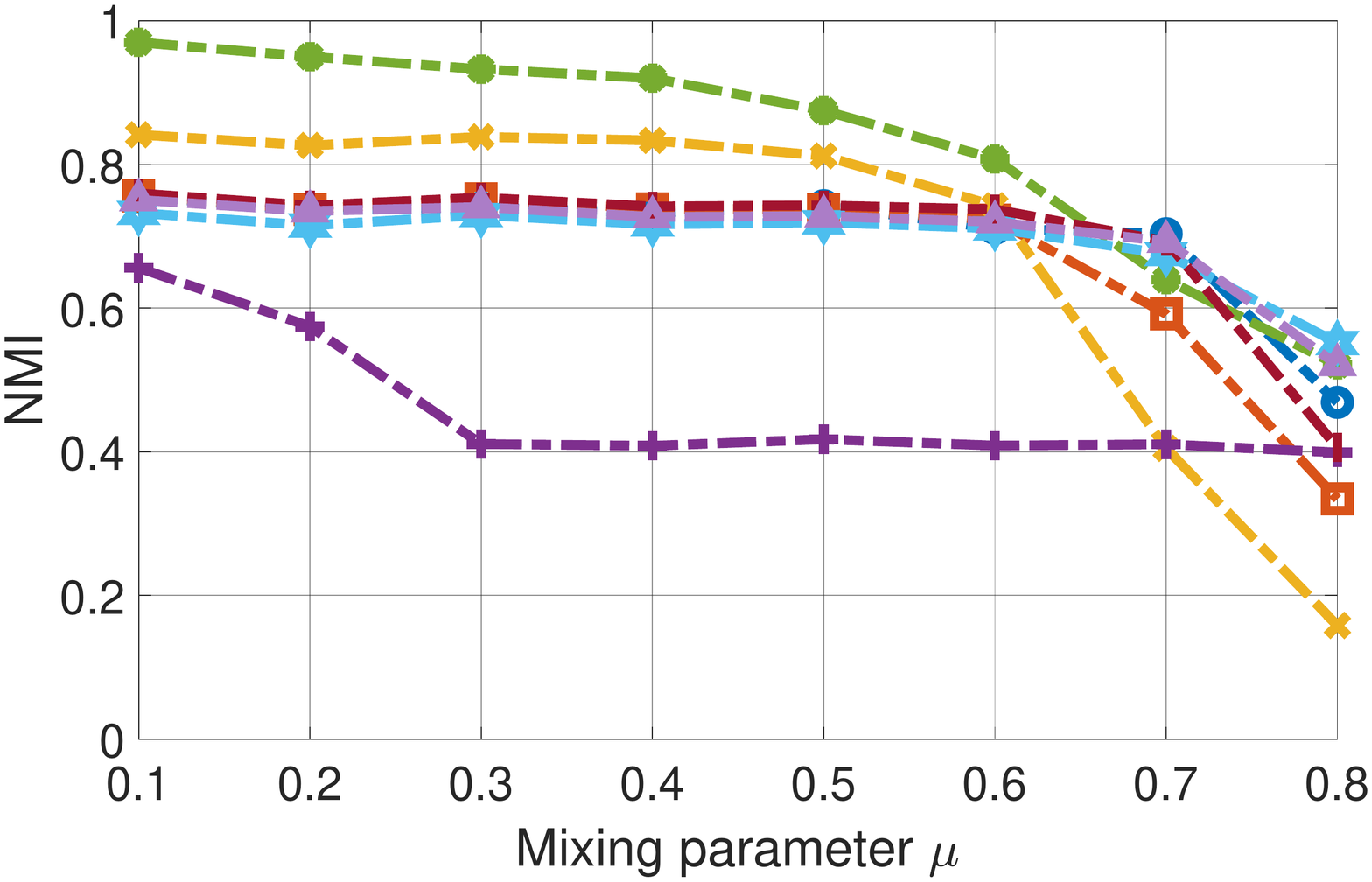}
            \caption{}
            \label{fig:3cc3L}
            \end{subfigure}
            \begin{subfigure}[b]{0.32\linewidth} \includegraphics[width=0.99\linewidth]{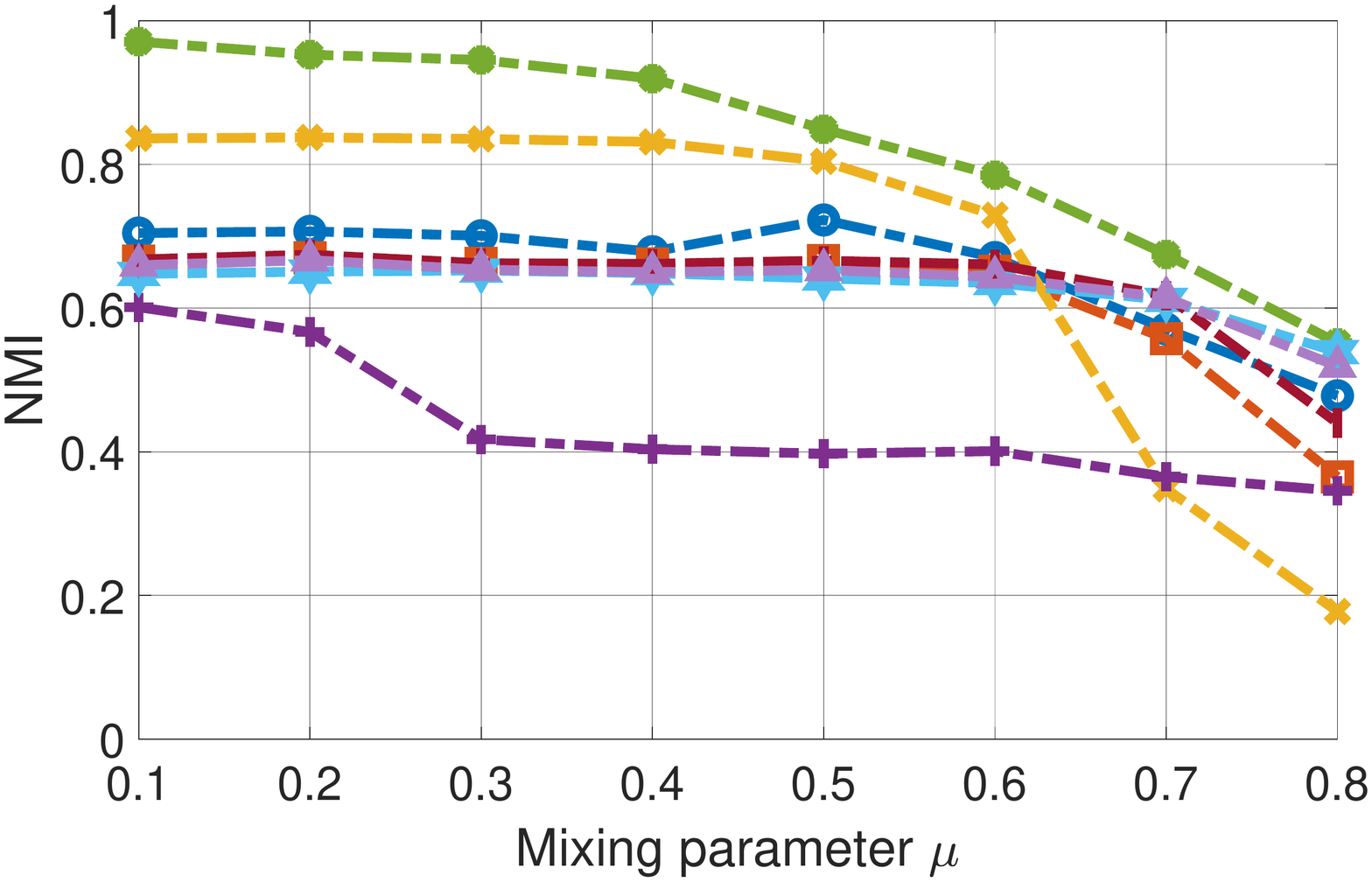}
            \caption{}
            \label{fig:3cc4L}
            \end{subfigure}
            \begin{subfigure}[b]{0.32\linewidth} \includegraphics[width=0.99\linewidth]{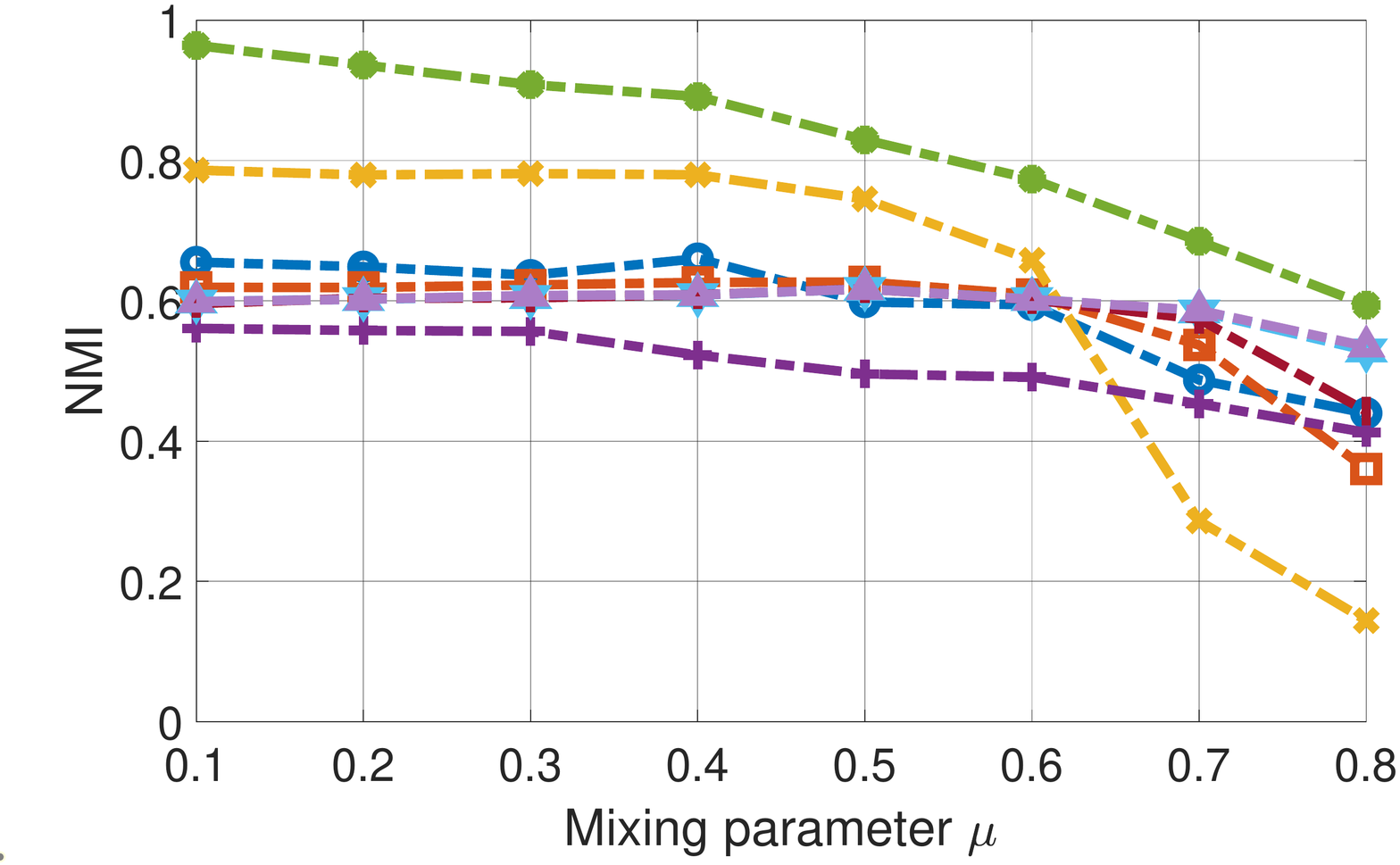}
            \caption{}
            \label{fig:3cc5L}
            \end{subfigure}
            \caption{Mean NMI over 100 realizations of (a)-(c) 3-layer, 4-layer and 5-layer benchmark networks, respectively, for the scenario with 2 common communities across all layers; (d)-(f) 3-layer, 4-layer and 5-layer benchmark networks, respectively, for the scenario with 3 common communities across different subsets of layers. All networks are generated with 8 different values of the mixing parameter $\mu$  and $n=256$.} 
            \label{fig:changemu}
\end{figure*}

Now, we need to show the uniqueness of this solution. By assumption, at
least one of the $\matr{\Theta}_l$ is full rank. For a non-singular matrix
$\matr{P}\in \mathbb{R}^{(k_c+k_{p_l})\times (k_c+k_{p_l})}$ we can say that $\matr{H'}_l\matr{P}$ and $\matr{P}^{-1}\matr{\Theta}_l\matr{P}^{{-1}^\top}$ is also a solution. 
Due to the orthogonality constraint, we must have $(\matr{H'}_l\matr{P})^\top\matr{H'}_l\matr{P}=\matr{P}^\top(\matr{Z}_l^\top\matr{Z}_l)^{-1/2}\matr{Z}_l^\top\matr{Z}_l(\matr{Z}_l^\top\matr{Z}_l)^{-1/2}\matr{P}=\matr{I}$, which implies $\matr{P}^\top\matr{P}=\matr{I}$, and therefore the solution is unique up to an orthogonal matrix. Moreover, since $\matr{Q}^{-1/2} = (\matr{Z}_l^\top\matr{Z}_l)^{-1/2}$ is a diagonal matrix with positive elements and therefore invertible, we have
that $\matr{Z}_i\matr{Q}^{-1/2} = \matr{Z}_j\matr{Q}^{-1/2}$ implies $\matr{Z}_i = \matr{Z}_j$.
\end{proof}

\section{Experiments}\label{sec:experiments}
\subsection{Synthetic Multiplex Networks}
\subsubsection{Model description}
Multiplex benchmark networks based on the model described in \cite{bazzi2016generative,jeub2016generative} were generated. 
The authors in \cite{bazzi2016generative} propose a two-step approach to generate multilayer networks with a community structure. First, a multilayer partition with the user-defined number of nodes in each layer, number of layers, and an interlayer dependency tensor that specifies the desired dependency structure between layers is generated. Next, for the given multilayer partition, edges in each layer are generated following a degree-corrected block model \cite{karrer2011stochastic} parameterized by
the distribution of expected degrees and a community mixing parameter $\mu \in [0,1]$. The mixing parameter $\mu$ controls the modularity of the network. When $\mu = 0$, all edges lie within communities, whereas $\mu=1$ implies that edges are distributed independently.
For multiplex networks, the probabilities in the interlayer dependency tensor are the same for all pairs of layers and are specified by $p\in [0,1]$. When $p=0$, the partitions are independent across layers while $p=1$ indicates an identical partition across layers.

In this paper, we extend the model described above to generate multiplex benchmark networks with common and private communities. We first generate the common communities by randomly selecting $n_c$ nodes  across all layers and setting the inter-layer dependency probability to $p_1$. For each common community, we decide whether it exists in a particular layer or not. Next, we independently generate the private communities for each layer with the remaining nodes in that layer.
 We generated 100 different random realizations of each multiplex network in order to
report the average performance metric on the experiments.

\subsubsection{Evaluation}
 We compared the performance of our method to well-known multiplex community detection algorithms. In particular, we compared with ONMTF applied to the aggregated multiplex networks using the average of the adjacency matrices (Aggregated Average), Spectral Clustering on Multi-Layer graphs (SC-ML) \cite{dong2013clustering}, Generalized Louvain (GenLouvain) multilayer community detection algorithm \cite{jutla2011generalized,mucha_community_2010}, Infomap \cite{de2015identifying},  Collective Symmetric Nonnegative Matrix Factorization (CSNMF) \cite{gligorijevic2018non}, Collective Projective Nonnegative Matrix Factorization (CPNMF) \cite{gligorijevic2018non}, and Collective Symmetric Nonnegative Matrix Tri-factorization (CSNMTF) \cite{gligorijevic2018non}. 

\subsubsection{Experiment 1}\label{exp:changemu}
In this experiment, we generated two different types of networks, one where the common communities are present across all layers and another where the common communities are present in different subsets of layers. 
Fig. \ref{fig:changemu} shows the results for the networks with 2 common communities across all layers for  3 (\ref{fig:2cc3L}), 4 (\ref{fig:2cc4L}), and 5 layers (\ref{fig:2cc5L}), and for the networks with 3 common communities across different subsets of layers for 3 (\ref{fig:3cc3L}), 4 (\ref{fig:3cc4L}), and 5 layers (\ref{fig:3cc5L}). The results indicate that our method performs well for both networks with common communities across all layers as well as for networks with common communities that do not span  all layers. Our method discovers the complete structure of the network rather than forcing it to have a consensus partition. Moreover, our method is robust to noise for larger values of $\mu$ compared to the other methods. We can also conclude that our algorithm performs better when the common communities are across all layers (see Fig. \ref{fig:2cc3L}, \ref{fig:2cc4L}, \ref{fig:2cc5L}) than when the multiplex community structure is more complex with common communities across subsets of layers (see Fig. \ref{fig:3cc3L}, \ref{fig:3cc4L}, \ref{fig:3cc5L} ), but it still outperforms the rest of the methods. From Fig. \ref{fig:changemu} 
 we can see that GenLouvain performs well when $\mu$ is small, but its performance deteriorates for $\mu$ values above 0.6. Another observation is that when the number of layers is small, the NMF-based methods perform closer to GenLouvain but when the number of layers increases, NMF algorithms perform worse. This is because these NMF methods perform aggregation on either the adjacency or the community indicator matrices. When there is more variation across layers, these methods fail to capture this heterogeneity.

\subsubsection{Experiment 2}\label{exp:changep1}
In the second experiment, we evaluated the robustness of the algorithm against variations in the common community structure  by fixing $\mu=0.1$ and varying the inter-layer dependency probability, $p_1$, i.e., the common communities are allowed to vary across layers. The performance of all methods for a 5-layer network are reported in Fig. \ref{fig:2cc5Lp} based on the average NMI over 100 realizations of the network. 
As we can see in Fig. \ref{fig:2cc5Lp}, our method still outperforms the other seven methods when there is some variation in the common community structure. This demonstrates that our method is robust to variations of the common community structure across layers. However, our algorithm is more sensitive to the drop in $p_1$ than the rest of the methods. This is because when the common communities have a high variation our algorithm may try to assign some of those nodes to the private communities. 

\begin{figure}[h!]
          \vspace{-0.5cm}
        \includegraphics[width=0.99\linewidth]{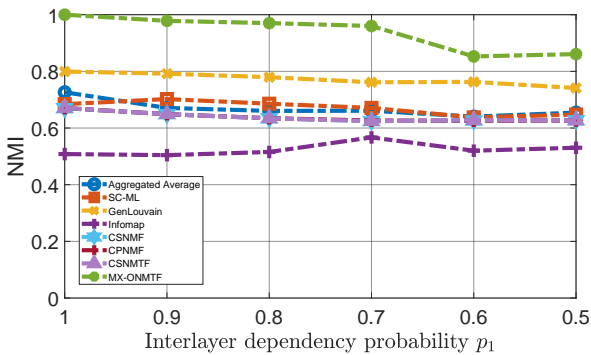}
         \vspace{-0.5cm}
     \caption{5-layer network generated with 6 different values of the interlayer dependency probability $p_1$, with  $\mu=0.1$, and $n=256$}
    \label{fig:2cc5Lp}
\end{figure}

\subsubsection{Experiment 3}\label{exp:changekc}
Another parameter in our model is the number of common communities $k_c$. In this experiment, we fixed $n=256$, $\mu=0.3$, and the number of communities in each layer, and varied $k_c$ from 1 to 7. When $k_c=7$, all communities are common across layers and there are no private communities. As we can see in Fig. \ref{fig:Expkcvar}, as $k_c$ is increased, the performance of all the other methods improves, as expected, because these methods are designed to detect the common community structure. When the communities are common across layers, most of the methods, except Infomap, converge to the same NMI value. The performance of MX-ONMTF is not affected by increasing $k_c$, and when all communities are common it performs similarly to the other methods.

\begin{figure}[h!]
          \vspace{-0.5cm} \includegraphics[width=0.99\linewidth]{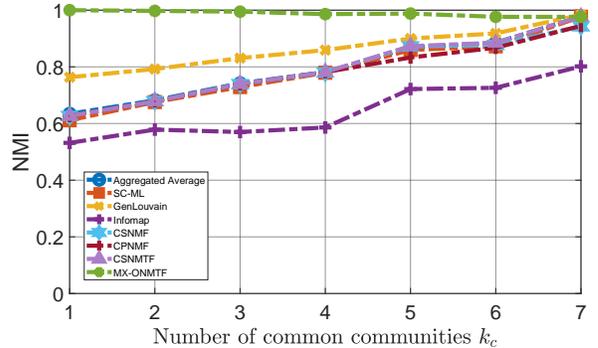}
                \vspace{-0.5cm}
     \caption{5-layer network generated with different values of common communities $k_c$ across layers.}
    \label{fig:Expkcvar}
\end{figure}

\subsubsection{Scalabilty Analysis}\label{exp:changeN}
In this experiment, we evaluate the effect of network size on the run time of the proposed algorithm. For this purpose, we fixed $\mu=0.3$, $L=5$, $k_c=3$ and varied $n$  from 32 to 8192. From Fig. \ref{fig:ExpNvar}, it can be seen that our method's run time is almost log-linear. This is comparable with all the other NMF based methods. However, our  as shown in the previous experiments, our method performs better. Most of this time complexity is due to the multiplicative update rule used in NMF-based algorithms and can be reduced using alternative approaches as discussed in \cite{vcopar2019fast}.

\begin{figure}[h!]
          \vspace{-0.5cm} \includegraphics[width=0.99\linewidth]{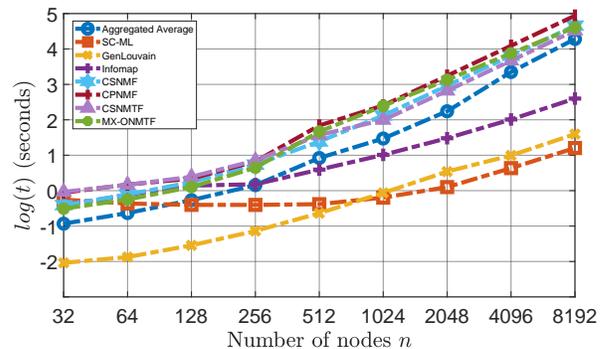}
     \caption{5-layer network with 9 different values of $n$, $k_c=3$, and $\mu=0.3$.}
    \label{fig:ExpNvar}
\end{figure}

\begin{figure*}[h]
            \centering \begin{subfigure}[b]{0.32\linewidth} \includegraphics[width=0.99\linewidth]{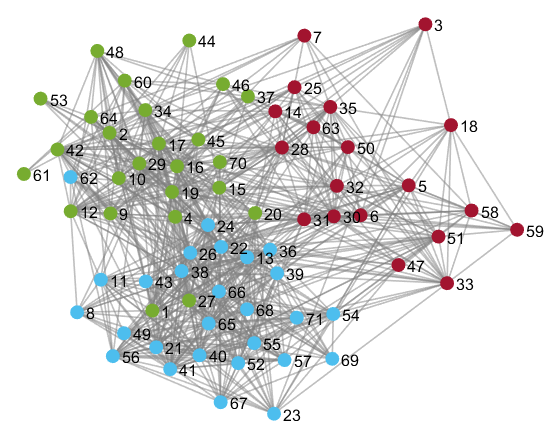}
            \caption{Advice}
            \end{subfigure}
            \hfill
          \begin{subfigure}[b]{0.32\linewidth} \includegraphics[width=0.99\linewidth]{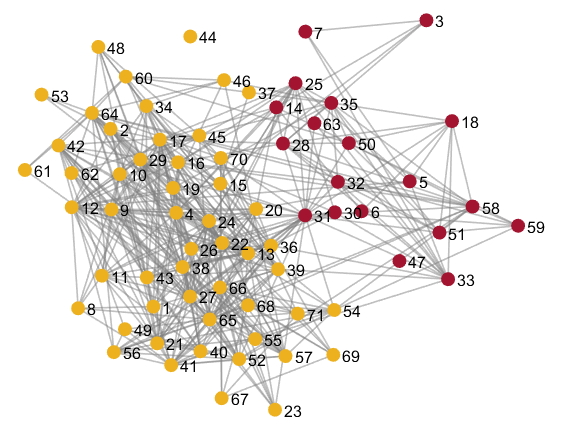}
            \caption{Friendship}
            \end{subfigure}
            \hfill
          \begin{subfigure}[b]{0.32\linewidth} \includegraphics[width=0.99\linewidth]{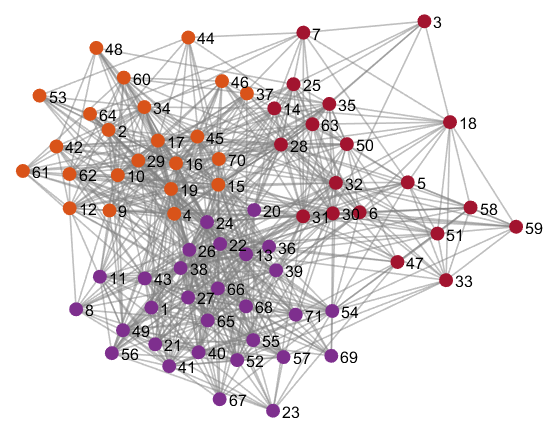}
            \caption{Co-work}
            \end{subfigure}
            \caption{Communities detected in Lazega Law Firm network across the three layers, advice, friendship, and co-work relationships. Red nodes  are in the common community across the three layers.}
            \label{fig:Laz3L}
\end{figure*}
\begin{table*}[h]
    \centering
    \begin{tabular}{cccccccc}
    \hline
     Method & Status & Gender & Office & Seniority & Age & Practice & Law School\\
    \hline
    GenLouvain & 0.0345 & 0.0307 & 0.5294 & 0.0807 & 0.0431 & 0.5468 & 0.0040 \\
Aggregated Average & 0.0383 & 0.0197 & 0.5379 & 0.1307 & 0.0798 & 0.4411 & 0.0201 \\
    SC\_ML & 0.0138 & 0.0259 & 0.0731 & 0.1225 & 0.0464 & 0.0249 & 0.0140 \\
    Infomap &  0.0179 & 0.0043 & 0.1668 & 0.2880& 0.0083 & 0.0003 & 0.0093 \\
      CSNMF &	0.0418 & 0.0291 & 0.5732 &	0.1155	& 0.0736 & 0.4227 &	0.0172\\
CPNMF &	0.0081 & 0.0524 & 0.1139 & 0.0514 & 0.0291 & 0.0187 & 0.0221\\
CSNMTF & 0.0395 & 0.0217 & 0.0798 &	0.0795 &	0.0487 & 0.1335 & 0.0279\\
    MX-ONMTF & \textbf{0.4752} & \textbf{0.4906} & \textbf{0.7386} & \textbf{0.4135} & \textbf{0.4203} & \textbf{0.6162} & \textbf{0.4226} \\
     \hline
    \end{tabular}
    \caption{NMI of the obtained community partition for each method and the metadata available for the Lazega Law Firm Multiplex Network.}
    \label{tab:NMIlaz}
\end{table*}
\subsection{Real World Multiplex Networks}
\subsubsection{Lazega Law Firm Multiplex Social Network}
Lazega Law Firm \cite{lazega2001collegial} is a multiplex social network with 71 nodes and three layers representing Co-work, Friendship and Advice relationships between partners and associates of a corporate law firm. This data set also includes information about some attributes of each node such as status, gender, office location, years with the firm, age, type of practice, and law school.

Applying MX-ONMTF to this network, we obtain one common community across all layers composed of the nodes colored in red as well as private communities for each layer, as shown in Fig. \ref{fig:Laz3L}. This network does not have ground truth community structure, but we can compute the NMI between the detected community structure and each type of node attributes, i.e., metadata, to gain better insight into the results and to be able to provide quantitative results \cite{roxana_pamfil_relating_2018}. For each of the attributes, the nodes are divided into communities based on that particular attribute. For example, for the status, the network is divided into two communities, partners and associates. For Age and Seniority, the nodes were grouped into five-year bins. The community structure for each attribute is used as ground truth to compute the NMI between each attribute and the community structure detected by our method. The NMI values given in Table \ref{tab:NMIlaz} for the partition obtained by our method, suggest that office location and type of practice (litigation or corporate) are highly correlated with community membership  across co-work, friendship and advice relationships. We can also see that the partition detected by MX-ONMTF has greater NMI values for each of the attributes. Therefore, our method detects a community structure that takes all of the attributes into account instead of partitioning with respect to just one attribute as the Aggregated Average does.

\subsubsection{C. Elegans Network}
C. Elegans Network \cite{de2015muxviz,chen2006wiring} is a multiplex network with 279 nodes and 3 layers representing different synaptic junctions (electric, chemical monadic, and polyadic) of 279 neurons of the Caenorhabditis Elegans connectome. Information about different attributes of the neurons in this dataset such as the group of neuron they belong to (bodywall, mechanosensory, ring interneurons, head motor neurons, etc.), the type of neuron (motor neurons, sensory neurons, interneurons), and the color (blue, red, yellow, orange, etc.) is available.

Table \ref{tab:NMIce} shows the NMI values between the community structures detected by each method and each of the three attributes available for this dataset.  The partition detected by MX-ONMTF has greater NMI values for each of the attributes compared to the other four methods.

\begin{table}[H]
  \centering
    \begin{tabular}{cccc}
    \hline
  Method & Neuron Group & Neuron Type & Color \\
\hline
 GenLouvain & 0.3756 &	0.1297 & 0.2362 \\
 Aggregated Average & 0.3839 & 0.1590 & 0.2977 \\
    SC-ML &  0.0185 &	0.0103 &	0.2690 \\
    Infomap &  0.2265 & 0.2345 & 0.2355\\
      CSNMF & 0.1635 &	0.075 &	0.1211 \\
CPNMF & 0.0854 & 0.0277 & 0.0628\\
CSNMTF &  0.1113 & 0.0402 & 0.0914 \\
    MX-ONMTF & \textbf{0.4074} & \textbf{0.4001} & \textbf{0.4593} \\
    \hline
    \end{tabular}%
\caption{NMI of the obtained community partition for each method and the metadata available for C. Elegans Network.}
  \label{tab:NMIce}%
\end{table}%

\subsubsection{YeastLandscape Multiplex Network}
Yeast Landscape is a multiplex genetic interaction
network of a specie of yeast, Saccharomyces Cerevisiae \cite{costanzo2010genetic,de2015muxviz}.  This network has 4458 nodes and 4 layers representing the  positive and negative interaction networks of genes in Saccharomyces cerevisiae and positive and negative correlation based networks in which genes with similar interaction profiles are connected to each other. For this paper, we use the bioprocess annotations of the genes available on the supplementary data file S6 of \cite{costanzo2010genetic} as ground truth. We divided the genes into 18 groups according to their primary bioprocess. There were 1580 genes in this network without attributes. 

Table \ref{tab:NMIyl} shows the NMI values between the community structures detected by each method and the bioprocess of the genes. MX-ONMTF gives the highest NMI value followed by the other NMF-based community detection methods.
\begin{table}[h]
  \centering
    \begin{tabular}{cc}
    \hline
   Method & Bioprocess \\
\hline
 GenLouvain & 0.0794 \\
 Aggregated Average & 0.1108 \\
    SC-ML & 0.1564  \\
    Infomap &  0.2987\\
      CSNMF & 0.3553\\
CPNMF &	0.3559 \\
CSNMTF & 0.3549  \\
    MX-ONMTF & \textbf{0.4123} \\
    \hline
    \end{tabular}%
\caption{NMI of the obtained community partition for each method with respect to the metadata available for YeastLandscape Network.}
   \label{tab:NMIyl}%
\end{table}%

\subsection{Multiview Networks}
In order to evaluate the performance of our method on networks where the communities are common across all layers, we use two multiview data sets, UCI Handwritten Digits\footnote{https://archive.ics.uci.edu/ml/datasets/Multiple+Features}  \cite{Dua:2019} and Caltech \cite{caltech}. 

The UCI Handwritten Digits data set consists of features of handwritten
digits from (0- 9) extracted from a collection of Dutch utility maps. There is a total of 2000 patterns that have been digitized in binary images, 200 patterns per digit. These digits are represented by six different feature sets: Fourier coefficients of the character shapes, profile correlations, Karhunen-Lo\`{e}ve coefficients, pixel averages in 2 $\times$ 3 windows, Zernike moments, and morphological features. Each layer of the multiplex network represents one of the 6 features. The graphs are constructed using $k$-nearest neighbors graphs with the nearest 50 neighbors and Euclidean distance.

Caltech-101 is a well-known object recognition dataset that consists of pictures of objects belonging to 102 categories. There are about 40 to 800 images per category for a total of 9144 images. This dataset consists of 6 types of features extracted from each image. A multiplex network with 6 layers representing each of the features, 102 classes, and 9144 nodes is constructed from this dataset using k-nearest neighbors graphs with the nearest 50 neighbors. A smaller version of this dataset is also used in these experiments, where only 20 objects are selected, resulting in a multiplex network with 6 layers, 20 communities, and 2386 nodes.

In this case, as we have the true class assignment, we compute the NMI with respect to this ground truth. As it can be seen in Table \ref{tab:NMImv}, our method performs better than the rest of the methods for the three networks. This indicates that even in cases where there are no private communities, our method is successful at obtaining the consensus community structure, thus can be used as an alternative to multiview clustering.

\begin{table}[h]
  \centering
    \begin{tabular}{cccc}
    \hline
   Method & Handwritten & Caltech-20 & Caltech-101 \\
\hline
 GenLouvain & 0.8791 &	0.5921 & 0.3406 \\
 Aggregated Average & 0.7957 & 0.5358 & 0.3941  \\
    SC-ML &  0.8435 & 0.6476 & 0.5016\\
    Infomap & 0.5367 & 0.3876 & 0.2583 \\
      CSNMF & 0.4499	 & 0.4250 & 0.3842 \\
CPNMF & 0.4421	 & 0.4208 & 0.3816 \\
CSNMTF & 0.4478  & 0.4264 & 0.3831  \\
    MX-ONMTF & \textbf{0.9432} & \textbf{0.6861} & \textbf{0.5660} \\
    \hline
    \end{tabular}%
\caption{NMI of the obtained community partition for multiview networks.}
   \label{tab:NMImv}%
\end{table}%

\section{Conclusions}\label{sec:conclusions}
In this paper, we proposed a multiplex community detection method based on ONMTF. The proposed method, MX-ONMTF, is able to  detect both common and private communities across layers, allowing us to differentiate between the topologies across layers. The proposed algorithm is based on multiplicative update rules and a proof of  convergence 
is provided. A new approach based on the eigengap criterion is introduced for determining the number of communities. 
Results for  both synthetic and real-world networks show that our method performs better than existing community detection methods for multiplex networks as it is able to handle the heterogeneity of the network topology across layers.  Moreover, experiments on multiview networks show that our method also performs well in cases where a consensus community structure is needed. 

\appendices
\section{Auxiliary Function Proof}
\begin{proposition}
 The following function, $Z(h,h_{ij}^t)$, \begin{align*}
    \begin{split}
    &Z(h,h_{ij}^t)=\mathcal{L}(h^t_{ij})+3\mathcal{L'}(h^t_{ij})(h-h^t_{ij})+ \\
    &\frac{3}{2}\frac{\sum_{l=1}^{L}(4\matr{H}_l\matr{G}_l\matr{H}_l^\top\matr{H}^t\matr{S}_l+4\matr{H}^t{\matr{H}^t}^\top\matr{A}_l\matr{H}^t\matr{S}_l)_{ij}}{h_{ij}^t}(h-h^t_{ij})^2
    \end{split}
\end{align*}

is an auxiliary function of $\mathcal{L}(H)$,
\begin{align*}
\begin{split}
\mathcal{L}(h)=\mathcal{L}(h^t_{ij})+\mathcal{L'}(h^t_{ij})(h-h^t_{ij})+ \frac{1}{2}\mathcal{L''}(h_{ij}^t)(h-h^t_{ij})^2.
\end{split}
\end{align*}
\end{proposition}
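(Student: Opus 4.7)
The plan is to verify the two defining properties of an auxiliary function from Definition 1: $Z(h_{ij}^t, h_{ij}^t) = \mathcal{L}(h_{ij}^t)$ and $Z(h, h_{ij}^t) \geq \mathcal{L}(h)$ for every admissible $h \geq 0$. The first property is immediate, since at $h = h_{ij}^t$ both the linear and the quadratic correction terms in $Z$ carry a factor $(h - h_{ij}^t)$ that vanishes, so $Z$ collapses to $\mathcal{L}(h_{ij}^t)$.

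For the majorization, let $\Phi(h_{ij}^t) := \sum_{l=1}^{L}(4\matr{H}_l\matr{G}_l\matr{H}_l^\top \matr{H}^t \matr{S}_l + 4\matr{H}^t{\matr{H}^t}^\top \matr{A}_l \matr{H}^t \matr{S}_l)_{ij}$, and observe from the gradient computed in Eq.~(6) that $\Phi(h_{ij}^t)$ is precisely the componentwise positive part of $\mathcal{L}'(h_{ij}^t)$ in the canonical splitting $\mathcal{L}'(h_{ij}^t) = \mathcal{L}'_{+}(h_{ij}^t) - \mathcal{L}'_{-}(h_{ij}^t)$ with both parts nonnegative. Forming $D(h) := Z(h, h_{ij}^t) - \mathcal{L}(h)$ and substituting the Taylor expression for $\mathcal{L}(h)$ supplied in the excerpt yields
\begin{equation*}
D(h) = 2\mathcal{L}'(h_{ij}^t)(h - h_{ij}^t) + \tfrac{1}{2}\Bigl[\tfrac{3\Phi(h_{ij}^t)}{h_{ij}^t} - \mathcal{L}''(h_{ij}^t)\Bigr](h - h_{ij}^t)^2.
\end{equation*}
From here I would rewrite $D(h)$ in the scaled variable $r := h/h_{ij}^t \geq 0$ and exhibit a sum-of-squares decomposition of the form $D(h) = \alpha(r-1)^2 + \beta(r - \rho)^2$ with $\alpha,\beta \geq 0$, where $\rho = \mathcal{L}'_{-}/\mathcal{L}'_{+}$ is exactly the multiplicative ratio appearing in the update rule of Eq.~(8); producing this decomposition uses the identity $\Phi = \mathcal{L}'_{+}$ to combine the linear contribution with the quadratic surplus.

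To make the decomposition work, one needs a termwise upper bound on $\mathcal{L}''(h_{ij}^t)$ by $3\Phi(h_{ij}^t)/h_{ij}^t$. The six inner summands of $\mathcal{L}''(h_{ij}^t)$ split into three positive and three negative pieces; for each positive piece I would apply the componentwise inequality $(\matr{P}\matr{H}^t\matr{Q})_{ij} \geq 0$ valid when $\matr{P}, \matr{Q}, \matr{H}^t \geq 0$, pairing it with a matching summand of $\Phi(h_{ij}^t)/h_{ij}^t$, while the three negative pieces only make the bound easier. The main obstacle will be the bookkeeping required to track which summand of $\mathcal{L}''$ is dominated by which summand of $3\Phi/h_{ij}^t$, and in particular verifying that the specific choice of coefficient $3$ is tight: smaller multiples leave the quadratic coefficient negative, while larger ones waste slack that is no longer available to absorb the linear $2\mathcal{L}'(h_{ij}^t)(h - h_{ij}^t)$ term via the completion of the square. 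Once these inequalities are established, $D(h) \geq 0$ on $h \geq 0$ follows, giving the auxiliary-function property.
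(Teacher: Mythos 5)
Your equality check at $h=h^t_{ij}$, your identification of $\Phi(h^t_{ij})=\sum_{l}(4\matr{H}_l\matr{G}_l\matr{H}_l^\top\matr{H}^t\matr{S}_l+4\matr{H}^t{\matr{H}^t}^\top\matr{A}_l\matr{H}^t\matr{S}_l)_{ij}$ with the positive part of the gradient in Eq.~\eqref{eq:gradient2}, your expression for $D(h)=Z(h,h^t_{ij})-\mathcal{L}(h)$, and the termwise second-order bound $3\Phi(h^t_{ij})/h^t_{ij}\ge\mathcal{L}''(h^t_{ij})$ are all correct, and the last of these is exactly the technical core of the paper's proof. The genuine gap is the centerpiece of your plan: the sum-of-squares decomposition $D=\alpha(r-1)^2+\beta(r-\rho)^2$ with $\alpha,\beta\ge0$ cannot exist whenever $\mathcal{L}'(h^t_{ij})\neq0$. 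Since both terms of $D$ vanish at $r=1$, evaluating your decomposition there forces $\beta(1-\rho)^2=0$, which collapses $D$ to a nonnegative multiple of $(r-1)^2$; but $D$ has slope $2\mathcal{L}'(h^t_{ij})h^t_{ij}\neq 0$ at $r=1$, a contradiction. The obstruction is not bookkeeping. Writing $x=h-h^t_{ij}$ and $C=3\Phi(h^t_{ij})/h^t_{ij}-\mathcal{L}''(h^t_{ij})\ge 0$, completing the square gives
\begin{equation*}
D(x)=2\mathcal{L}'(h^t_{ij})\,x+\tfrac{C}{2}\,x^2=\tfrac{C}{2}\Bigl(x+\tfrac{2\mathcal{L}'(h^t_{ij})}{C}\Bigr)^2-\tfrac{2\bigl(\mathcal{L}'(h^t_{ij})\bigr)^2}{C},
\end{equation*}
so $D<0$ on the entire open interval between the roots $x=0$ and $x=-4\mathcal{L}'(h^t_{ij})/C$, and a nonempty piece of that interval always lies in the feasible set $h\ge0$ (it sits on the side of $h^t_{ij}$ whose sign is opposite to that of $\mathcal{L}'$). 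No pointwise-nonnegative representation, and no tuning of the coefficient $3$, can absorb a linear term that changes sign at the expansion point; your ``tightness of the coefficient 3'' step, which presumes completion of squares can soak up $2\mathcal{L}'(h^t_{ij})(h-h^t_{ij})$, fails precisely here.

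For comparison, the paper never attempts to absorb this linear discrepancy: it compares the three Taylor terms of $\mathcal{L}(h)$ with the three terms of $Z(h,h^t_{ij})$ one by one, asserting that the zeroth- and first-order terms of $Z$ dominate those of $\mathcal{L}$, and then devotes all of its work to the second-order inequality you also planned (the four componentwise bounds such as $(\matr{H}^t{\matr{H}^t}^\top\matr{A}_l\matr{H}^t\matr{S}_l)_{ij}/h^t_{ij}\ge h^t_{ij}(\matr{A}_l\matr{H}^t\matr{S}_l)_{ij}$). The asserted first-order dominance is equivalent to $2\mathcal{L}'(h^t_{ij})(h-h^t_{ij})\ge0$ for all feasible $h$, which is exactly the quantity your $D(h)$ isolates and which is not true without further sign information relating $h-h^t_{ij}$ to $\mathcal{L}'(h^t_{ij})$. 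So your proposal has the virtue of making explicit the step the paper leaves implicit, but the SOS device you propose cannot close it; any complete argument must handle the linear term by a different mechanism (for instance, by building $Z$ so that its first-order term matches that of $\mathcal{L}$ exactly, as in the classical auxiliary-function constructions of Lee--Seung and Ding et al., rather than by majorizing a tripled linear term).
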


\begin{proof} 
First,  when $h=h_{ij}^t$ the equality $ Z(h,h)=\mathcal{L}(h)$ holds. Now, we need to show that $Z(h,h_{ij}^t)\ge\mathcal{L}(h)$.

It can be seen that the first and second terms of $Z(h,h_{ij}^t)$ are greater than the first and second terms in $\mathcal{L}(h)$. Therefore, it suffices to show that $\frac{3\sum_{l=1}^{L}(4\matr{H}_l\matr{G}_l{\matr{H}_l}^\top\matr{H}^t\matr{S}_l+4\matr{H}^t{\matr{H}^t}^\top\matr{A}_l\matr{H}^t\matr{S}_l)_{ij}}{h_{ij}^t}\ge  \mathcal{L''}(h_{ij}^t)$.

It can be shown that

\begin{equation*}
\begin{split}
\frac{(\matr{H}_l\matr{G}_l\matr{H}_l^\top\matr{H}^t\matr{S}_l)_{ij}}{h_{ij}^t}=\frac{\sum_{p,q}(\matr{H}_l\matr{G}_l\matr{H}_l^\top)_{ip}h_{pq}^{qj}}{h_{ij}^t}\\
 \geq (\matr{H}_l\matr{G}_l\matr{H}_l^\top)_{ii}\matr{S}_{jj},
\end{split}
\end{equation*}
\vspace{-0.3cm}

\begin{equation*}
\begin{split}
\frac{(\matr{H}^t{\matr{H}}^{t^\top}\matr{A}_l\matr{H}^t\matr{S}_l)_{ij}}{h_{ij}^t}=\frac{\sum_{p}^{}h_{ip}^t({\matr{H}^t}^\top\matr{A}_l\matr{H}^t\matr{S}_l)_{pj}}{h_{ij}^t}\\
\geq {({\matr{H}^t}^\top\matr{A}_l\matr{H}^t\matr{S}_l)_{jj}}, 
\end{split}
\end{equation*}

\vspace{-0.3cm}

\begin{align*}
\frac{(\matr{H}^t{\matr{H}}^{t^\top}\matr{A}_l\matr{H}^t\matr{S}_l)_{ij}}{h_{ij}^t}=\frac{\sum_{p,q}^{}h_{ip}^th_{qp}^t(\matr{A}_l\matr{H}^t\matr{S}_l)_{qj}}{h_{ij}^t}\\
\ge h_{ij}^t(\matr{A}_l\matr{H}^t\matr{S}_l)_{ij},
\end{align*}

\vspace{-0.3cm}
\begin{align*}
\frac{(\matr{H}^t{\matr{H}}^{t^\top}\matr{A}_l\matr{H}^t\matr{S}_l)_{ij}}{h_{ij}^t}=\frac{\sum_{m,b}^{}(\matr{H}^t{\matr{H}}^{t^\top}\matr{A}_l)_{im}h_{mb}^{bj}}{h_{ij}^t}\\
\ge (\matr{H}^t{\matr{H}}^{t^\top}\matr{A}_l)_{ii}\matr{S}_{l_{jj}}.\\
\end{align*}

Therefore,\\
\begin{align*}
    \begin{split}
3\frac{(\matr{H}^t{\matr{H}}^{t^\top}\matr{A}_l\matr{H}^t\matr{S}_l)_{ij}}{h_{ij}^t}&\ge ({\matr{H}^t}^\top\matr{A}_l\matr{H}^t\matr{S}_l)_{ij}+h_{ij}^t(\matr{A}_l\matr{H}^t\matr{S}_l)_{ij}\\+&(\matr{H}^t{\matr{H}}^{t^\top}\matr{A}_l)_{ii}\matr{S}_{l_{jj}},
    \end{split}
\end{align*}

\noindent and thus $\frac{3\sum_{l=1}^{L}(4\matr{H}_l\matr{G}_l\matr{H}_l^\top\matr{H}^t\matr{S}_l+4\matr{H}^t\matr{H}^{t^\top}\matr{A}_l\matr{H}^t\matr{S}_l)_{ij}}{h_{ij}^t}\ge  \mathcal{L''}(h_{ij}^t)$. Therefore, Eq. \eqref{eq:auxiliary} is an auxiliary function of $\mathcal{L}(h)$.
\end{proof}

\ifCLASSOPTIONcompsoc
  \section*{Acknowledgments}
\else
  \section*{Acknowledgment}
\fi

This work was supported in part by National Science Foundation under Grants CCF-2006800.

\ifCLASSOPTIONcaptionsoff
  \newpage
\fi

\bibliographystyle{IEEEtran}
\bibliography{main}

\begin{thebibliography}{10}
\providecommand{\url}[1]{#1}
\csname url@samestyle\endcsname
\providecommand{\newblock}{\relax}
\providecommand{\bibinfo}[2]{#2}
\providecommand{\BIBentrySTDinterwordspacing}{\spaceskip=0pt\relax}
\providecommand{\BIBentryALTinterwordstretchfactor}{4}
\providecommand{\BIBentryALTinterwordspacing}{\spaceskip=\fontdimen2\font plus
\BIBentryALTinterwordstretchfactor\fontdimen3\font minus
  \fontdimen4\font\relax}
\providecommand{\BIBforeignlanguage}[2]{{%
\expandafter\ifx\csname l@#1\endcsname\relax
\typeout{** WARNING: IEEEtran.bst: No hyphenation pattern has been}%
\typeout{** loaded for the language `#1'. Using the pattern for}%
\typeout{** the default language instead.}%
\else
\language=\csname l@#1\endcsname
\fi
#2}}
\providecommand{\BIBdecl}{\relax}
\BIBdecl

\bibitem{barabasi2013network}
A.-L. Barab{\'a}si, ``Network science,'' \emph{Philosophical Transactions of
  the Royal Society A: Mathematical, Physical and Engineering Sciences}, vol.
  371, no. 1987, p. 20120375, 2013.

\bibitem{kivela_multilayer_2014}
M.~Kivel{\"a}, A.~Arenas, M.~Barthelemy, J.~P. Gleeson, Y.~Moreno, and M.~A.
  Porter, ``Multilayer networks,'' \emph{Journal of complex networks}, vol.~2,
  no.~3, pp. 203--271, 2014.

\bibitem{smith2019using}
S.~E. Smith-Aguilar, F.~Aureli, L.~Busia, C.~Schaffner, and
  G.~Ramos-Fern{\'a}ndez, ``Using multiplex networks to capture the
  multidimensional nature of social structure,'' \emph{Primates}, vol.~60,
  no.~3, pp. 277--295, 2019.

\bibitem{aleta2017multilayer}
A.~Aleta, S.~Meloni, and Y.~Moreno, ``A multilayer perspective for the analysis
  of urban transportation systems,'' \emph{Scientific reports}, vol.~7, no.~1,
  pp. 1--9, 2017.

\bibitem{fortunato2016community}
S.~Fortunato and D.~Hric, ``Community detection in networks: A user guide,''
  \emph{Physics reports}, vol. 659, pp. 1--44, 2016.

\bibitem{magnani2021community}
M.~Magnani, O.~Hanteer, R.~Interdonato, L.~Rossi, and A.~Tagarelli, ``Community
  detection in multiplex networks,'' \emph{ACM Computing Surveys (CSUR)},
  vol.~54, no.~3, pp. 1--35, 2021.

\bibitem{berlingerio2011finding}
M.~Berlingerio, M.~Coscia, and F.~Giannotti, ``Finding and characterizing
  communities in multidimensional networks,'' in \emph{2011 International
  Conference on advances in social networks analysis and mining}.\hskip 1em
  plus 0.5em minus 0.4em\relax IEEE, 2011, pp. 490--494.

\bibitem{chen2017multilayer}
P.-Y. Chen and A.~O. Hero, ``Multilayer spectral graph clustering via convex
  layer aggregation: Theory and algorithms,'' \emph{IEEE Transactions on Signal
  and Information Processing over Networks}, vol.~3, no.~3, pp. 553--567, 2017.

\bibitem{taylor2017super}
D.~Taylor, R.~S. Caceres, and P.~J. Mucha, ``Super-resolution community
  detection for layer-aggregated multilayer networks,'' \emph{Physical Review
  X}, vol.~7, no.~3, p. 031056, 2017.

\bibitem{berlingerio2013abacus}
M.~Berlingerio, F.~Pinelli, and F.~Calabrese, ``Abacus: frequent pattern
  mining-based community discovery in multidimensional networks,'' \emph{Data
  Mining and Knowledge Discovery}, vol.~27, no.~3, pp. 294--320, 2013.

\bibitem{tang2012community}
L.~Tang, X.~Wang, and H.~Liu, ``Community detection via heterogeneous
  interaction analysis,'' \emph{Data mining and knowledge discovery}, vol.~25,
  no.~1, pp. 1--33, 2012.

\bibitem{dong2013clustering}
X.~Dong, P.~Frossard, P.~Vandergheynst, and N.~Nefedov, ``Clustering on
  multi-layer graphs via subspace analysis on grassmann manifolds,'' \emph{IEEE
  Transactions on signal processing}, vol.~62, no.~4, pp. 905--918, 2013.

\bibitem{kuncheva2015community}
Z.~Kuncheva and G.~Montana, ``Community detection in multiplex networks using
  locally adaptive random walks,'' in \emph{Proceedings of the 2015 IEEE/ACM
  International Conference on Advances in Social Networks Analysis and Mining
  2015}, 2015, pp. 1308--1315.

\bibitem{de2015identifying}
M.~De~Domenico, A.~Lancichinetti, A.~Arenas, and M.~Rosvall, ``Identifying
  modular flows on multilayer networks reveals highly overlapping organization
  in interconnected systems,'' \emph{Physical Review X}, vol.~5, no.~1, p.
  011027, 2015.

\bibitem{mucha_community_2010}
P.~J. Mucha, T.~Richardson, K.~Macon, M.~A. Porter, and J.-P. Onnela,
  ``Community structure in time-dependent, multiscale, and multiplex
  networks,'' \emph{science}, vol. 328, no. 5980, pp. 876--878, 2010.

\bibitem{amelio2014community}
A.~Amelio and C.~Pizzuti, ``Community detection in multidimensional networks,''
  in \emph{2014 IEEE 26th International Conference on Tools with Artificial
  Intelligence}.\hskip 1em plus 0.5em minus 0.4em\relax IEEE, 2014, pp.
  352--359.

\bibitem{zhu2014unified}
G.~Zhu and K.~Li, ``A unified model for community detection of multiplex
  networks,'' in \emph{International Conference on Web Information Systems
  Engineering}.\hskip 1em plus 0.5em minus 0.4em\relax Springer, 2014, pp.
  31--46.

\bibitem{roxana_pamfil_relating_2018}
A.~R. Pamfil, S.~D. Howison, R.~Lambiotte, and M.~A. Porter, ``Relating
  modularity maximization and stochastic block models in multilayer networks,''
  \emph{SIAM Journal on Mathematics of Data Science}, vol.~1, no.~4, pp.
  667--698, 2019.

\bibitem{ortiz2022orthogonal}
M.~Ortiz-Bouza and S.~Aviyente, ``Orthogonal nonnegative matrix
  tri-factorization for community detection in multiplex networks,'' in
  \emph{ICASSP 2022-2022 IEEE International Conference on Acoustics, Speech and
  Signal Processing (ICASSP)}.\hskip 1em plus 0.5em minus 0.4em\relax IEEE,
  2022, pp. 5987--5991.

\bibitem{ali2019latent}
H.~T. Ali, S.~Liu, Y.~Yilmaz, R.~Couillet, I.~Rajapakse, and A.~Hero, ``Latent
  heterogeneous multilayer community detection,'' in \emph{ICASSP 2019-2019
  IEEE International Conference on Acoustics, Speech and Signal Processing
  (ICASSP)}.\hskip 1em plus 0.5em minus 0.4em\relax IEEE, 2019, pp. 8142--8146.

\bibitem{de2017community}
C.~De~Bacco, E.~A. Power, D.~B. Larremore, and C.~Moore, ``Community detection,
  link prediction, and layer interdependence in multilayer networks,''
  \emph{Physical Review E}, vol.~95, no.~4, p. 042317, 2017.

\bibitem{boutemine2017mining}
O.~Boutemine and M.~Bouguessa, ``Mining community structures in
  multidimensional networks,'' \emph{ACM Transactions on Knowledge Discovery
  from Data (TKDD)}, vol.~11, no.~4, pp. 1--36, 2017.

\bibitem{karimi2020multiplex}
F.~Karimi, S.~Lotfi, and H.~Izadkhah, ``Multiplex community detection in
  complex networks using an evolutionary approach,'' \emph{Expert Systems with
  Applications}, vol. 146, p. 113184, 2020.

\bibitem{chen2017block}
C.~Chen, M.~K. Ng, and S.~Zhang, ``Block spectral clustering methods for
  multiple graphs,'' \emph{Numerical Linear Algebra with Applications},
  vol.~24, no.~1, p. e2075, 2017.

\bibitem{ding_equivalence_2005}
C.~Ding, X.~He, and H.~D. Simon, ``On the equivalence of nonnegative matrix
  factorization and spectral clustering,'' in \emph{Proceedings of the 2005
  SIAM international conference on data mining}, 2005, pp. 606--610.

\bibitem{wang_nonnegative_2011}
H.~Wang, F.~Nie, H.~Huang, and C.~Ding, ``Nonnegative matrix tri-factorization
  based high-order co-clustering and its fast implementation,'' in \emph{2011
  IEEE 11th international conference on data mining}.\hskip 1em plus 0.5em
  minus 0.4em\relax IEEE, 2011, pp. 774--783.

\bibitem{mankad2013structural}
S.~Mankad and G.~Michailidis, ``Structural and functional discovery in dynamic
  networks with non-negative matrix factorization,'' \emph{Physical Review E},
  vol.~88, no.~4, p. 042812, 2013.

\bibitem{sun2017non}
B.-J. Sun, H.~Shen, J.~Gao, W.~Ouyang, and X.~Cheng, ``A non-negative symmetric
  encoder-decoder approach for community detection,'' in \emph{Proceedings of
  the 2017 ACM on Conference on Information and Knowledge Management}, 2017,
  pp. 597--606.

\bibitem{ma2018community}
X.~Ma, D.~Dong, and Q.~Wang, ``Community detection in multi-layer networks
  using joint nonnegative matrix factorization,'' \emph{IEEE Transactions on
  Knowledge and Data Engineering}, vol.~31, no.~2, pp. 273--286, 2018.

\bibitem{gligorijevic2018non}
V.~Gligorijevi{\'c}, Y.~Panagakis, and S.~Zafeiriou, ``Non-negative matrix
  factorizations for multiplex network analysis,'' \emph{IEEE transactions on
  pattern analysis and machine intelligence}, vol.~41, no.~4, pp. 928--940,
  2018.

\bibitem{nguyen2015community}
H.~T. Nguyen, T.~N. Dinh, and T.~Vu, ``Community detection in multiplex social
  networks,'' in \emph{2015 IEEE Conference on Computer Communications
  Workshops (INFOCOM WKSHPS)}.\hskip 1em plus 0.5em minus 0.4em\relax IEEE,
  2015, pp. 654--659.

\bibitem{von2007tutorial}
U.~Von~Luxburg, ``A tutorial on spectral clustering,'' \emph{Statistics and
  computing}, vol.~17, no.~4, pp. 395--416, 2007.

\bibitem{cozzo2015structure}
E.~Cozzo, M.~Kivel{\"a}, M.~De~Domenico, A.~Sol{\'e}-Ribalta, A.~Arenas,
  S.~G{\'o}mez, M.~A. Porter, and Y.~Moreno, ``Structure of triadic relations
  in multiplex networks,'' \emph{New Journal of Physics}, vol.~17, no.~7, p.
  073029, 2015.

\bibitem{wu2018nonnegative}
W.~Wu, S.~Kwong, Y.~Zhou, Y.~Jia, and W.~Gao, ``Nonnegative matrix
  factorization with mixed hypergraph regularization for community detection,''
  \emph{Information Sciences}, vol. 435, pp. 263--281, 2018.

\bibitem{lu2020community}
H.~Lu, X.~Sang, Q.~Zhao, and J.~Lu, ``Community detection algorithm based on
  nonnegative matrix factorization and pairwise constraints,'' \emph{Physica A:
  Statistical Mechanics and its Applications}, vol. 545, p. 123491, 2020.

\bibitem{ding_orthogonal_2006}
C.~Ding, T.~Li, W.~Peng, and H.~Park, ``Orthogonal nonnegative matrix
  t-factorizations for clustering,'' in \emph{Proceedings of the 12th ACM
  SIGKDD international conference on Knowledge discovery and data mining},
  2006, pp. 126--135.

\bibitem{lee_algorithms_2001}
D.~Lee and H.~Seung, ``Algorithms for non-negative matrix factorization,''
  vol.~13, pp. 535--541, 2001.

\bibitem{yoo_orthogonal_2010}
J.~Yoo and S.~Choi, ``Orthogonal nonnegative matrix tri-factorization for
  co-clustering: Multiplicative updates on stiefel manifolds,''
  \emph{Information processing \& management}, vol.~46, no.~5, pp. 559--570,
  2010.

\bibitem{li2018nonnegative}
T.~Li and C.-c. Ding, ``Nonnegative matrix factorizations for clustering: A
  survey,'' in \emph{Data Clustering}.\hskip 1em plus 0.5em minus 0.4em\relax
  Chapman and Hall/CRC, 2018, pp. 149--176.

\bibitem{luo2021symmetric}
X.~Luo, Z.~Liu, L.~Jin, Y.~Zhou, and M.~Zhou, ``Symmetric nonnegative matrix
  factorization-based community detection models and their convergence
  analysis,'' \emph{IEEE Transactions on Neural Networks and Learning Systems},
  vol.~33, no.~3, pp. 1203--1215, 2021.

\bibitem{danon2005comparing}
L.~Danon, A.~Diaz-Guilera, J.~Duch, and A.~Arenas, ``Comparing community
  structure identification,'' \emph{Journal of statistical mechanics: Theory
  and experiment}, vol. 2005, no.~09, p. P09008, 2005.

\bibitem{li2008quantitative}
Z.~Li, S.~Zhang, R.-S. Wang, X.-S. Zhang, and L.~Chen, ``Quantitative function
  for community detection,'' \emph{Physical review E}, vol.~77, no.~3, p.
  036109, 2008.

\bibitem{pramanik2017discovering}
S.~Pramanik, R.~Tackx, A.~Navelkar, J.-L. Guillaume, and B.~Mitra,
  ``Discovering community structure in multilayer networks,'' in \emph{2017
  IEEE International Conference on Data Science and Advanced Analytics
  (DSAA)}.\hskip 1em plus 0.5em minus 0.4em\relax IEEE, 2017, pp. 611--620.

\bibitem{liu2018global}
F.~Liu, D.~Choi, L.~Xie, and K.~Roeder, ``Global spectral clustering in dynamic
  networks,'' \emph{Proceedings of the National Academy of Sciences}, vol. 115,
  no.~5, pp. 927--932, 2018.

\bibitem{paul2020spectral}
S.~Paul and Y.~Chen, ``Spectral and matrix factorization methods for consistent
  community detection in multi-layer networks,'' \emph{The Annals of
  Statistics}, vol.~48, no.~1, pp. 230--250, 2020.

\bibitem{bazzi2016generative}
M.~Bazzi, L.~Jeub, A.~Arenas, S.~D. Howison, and M.~A. Porter, ``Generative
  benchmark models for mesoscale structure in multilayer networks,''
  \emph{arXiv preprint arXiv:1608.06196}, p.~20, 2016.

\bibitem{jeub2016generative}
\BIBentryALTinterwordspacing
L.~Jeub and M.~Bazzi, ``A generative model for mesoscale structure in
  multilayer networks implemented in matlab,'' 2016. [Online]. Available:
  \url{https://github.com/MultilayerGM/MultilayerGM-MATLAB}
\BIBentrySTDinterwordspacing

\bibitem{karrer2011stochastic}
B.~Karrer and M.~E. Newman, ``Stochastic blockmodels and community structure in
  networks,'' \emph{Physical review E}, vol.~83, no.~1, p. 016107, 2011.

\bibitem{jutla2011generalized}
\BIBentryALTinterwordspacing
I.~S. Jutla, L.~G. Jeub, P.~J. Mucha \emph{et~al.}, ``A generalized louvain
  method for community detection implemented in matlab.'' [Online]. Available:
  \url{https://github.com/GenLouvain/GenLouvain (2011-2019)}
\BIBentrySTDinterwordspacing

\bibitem{vcopar2019fast}
A.~{\v{C}}opar, B.~Zupan, and M.~Zitnik, ``Fast optimization of non-negative
  matrix tri-factorization,'' \emph{PloS one}, vol.~14, no.~6, p. e0217994,
  2019.

\bibitem{lazega2001collegial}
E.~Lazega \emph{et~al.}, \emph{The collegial phenomenon: The social mechanisms
  of cooperation among peers in a corporate law partnership}.\hskip 1em plus
  0.5em minus 0.4em\relax Oxford University Press on Demand, 2001.

\bibitem{de2015muxviz}
M.~De~Domenico, M.~A. Porter, and A.~Arenas, ``Muxviz: a tool for multilayer
  analysis and visualization of networks,'' \emph{Journal of Complex Networks},
  vol.~3, no.~2, pp. 159--176, 2015.

\bibitem{chen2006wiring}
B.~L. Chen, D.~H. Hall, and D.~B. Chklovskii, ``Wiring optimization can relate
  neuronal structure and function,'' \emph{Proceedings of the National Academy
  of Sciences}, vol. 103, no.~12, pp. 4723--4728, 2006.

\bibitem{costanzo2010genetic}
M.~Costanzo, A.~Baryshnikova, J.~Bellay, Y.~Kim, E.~D. Spear, C.~S. Sevier,
  H.~Ding, J.~L. Koh, K.~Toufighi, S.~Mostafavi \emph{et~al.}, ``The genetic
  landscape of a cell,'' \emph{science}, vol. 327, no. 5964, pp. 425--431,
  2010.

\bibitem{Dua:2019}
\BIBentryALTinterwordspacing
D.~Dua and C.~Graff, ``Uci machine learning repository,'' 2017. [Online].
  Available: \url{http://archive.ics.uci.edu/ml}
\BIBentrySTDinterwordspacing

\bibitem{caltech}
\BIBentryALTinterwordspacing
{Li F. F., Andreeto M., Ranzato M., and Perona P.}, ``{Caltech 101 (1.0) [Data
  set]. CaltechDATA.}'' 2022. [Online]. Available:
  \url{https://doi.org/10.22002/D1.20086}
\BIBentrySTDinterwordspacing

\end{thebibliography}
\vspace{-1cm}

\end{document}